\renewcommand{\theequation}{\thesection\arabic{equation}}
\newtheorem{theorem}{Theorem}
\newtheorem{condition}{Condition}
\newtheorem{corollary}{Corollary}
\theoremstyle{definition}
\def\bSig\mathbf{\Sigma}
\newcommand\red[1]{{\color{black}#1}}
\newcommand\blue[1]{{\color{black}#1}}
\newcommand\redreview[1]{{\color{black}#1}}
\newcommand\chixiang[1]{{\color{black}#1}}
\def\bfn{{\ensuremath{\bf n}}}
\def\bfg{{\ensuremath{\bf g}}}
\def\bfD{{\ensuremath{\bf D}}}
\def\bfQ{{\ensuremath{\bf Q}}}
\def\bfU{{\ensuremath{\bf U}}}
\def\bfV{{\ensuremath{\bf V}}}
\def\bfX{{\ensuremath{\bf X}}}
\def\bfY{{\ensuremath{\bf Y}}}
\def\bfdelta{{\ensuremath\boldsymbol{\delta}}}
\def\bfbeta{{\ensuremath\boldsymbol{\beta}}}
\def\bfgamma{{\ensuremath\boldsymbol{\gamma}}}
\def\bfOmega{{\ensuremath\boldsymbol{\Omega}}}
\def\bfSigma{{\ensuremath\boldsymbol{\Sigma}}}
\def\bfmu{{\ensuremath{{\boldsymbol{\mu}}}}}
\def\bfg{{\ensuremath{\bf g}}}
\def\bfQ{{\ensuremath{\bf Q}}}
\def\bfU{{\ensuremath{\bf U}}}
\def\bfV{{\ensuremath{\bf V}}}
\def\bfX{{\ensuremath{\bf X}}}
\def\bfY{{\ensuremath{\bf Y}}}
\def\bfI{{\ensuremath{\bf I}}}
\def\bfH{{\ensuremath{\bf H}}}
\def\bfbeta{{\ensuremath\boldsymbol{\beta}}}
\def\bfmu{{\ensuremath\boldsymbol{\mu}}}
\def\bfOmega{{\ensuremath\boldsymbol{\Omega}}}
\def\bfSigma{{\ensuremath\boldsymbol{\Sigma}}}
\def\bfepsilon{{\ensuremath\boldsymbol{\epsilon}}}
\def\bflambda{{\ensuremath\boldsymbol{\lambda}}}
\def\bfrho{{\ensuremath\boldsymbol{\rho}}}
\def\T{{ \mathrm{\scriptscriptstyle T} }}
\begin{document}


\renewcommand{\baselinestretch}{2}

\markright{ \hbox{\footnotesize\rm 
}\hfill\\[-13pt]
\hbox{\footnotesize\rm
}\hfill }

\markboth{\hfill{\footnotesize\rm FIRSTNAME1 LASTNAME1 AND FIRSTNAME2 LASTNAME2} \hfill}
{\hfill {\footnotesize\rm A Robust Consistent Information Criterion For Model Selection} \hfill}

\renewcommand{\thefootnote}{}
$\ $\par


\fontsize{12}{14pt plus.8pt minus .6pt}\selectfont \vspace{0.8pc}
\centerline{\large\bf A Robust Consistent Information Criterion for Model Selection }
\vspace{2pt} \centerline{\large\bf based on Empirical Likelihood}
\vspace{.4cm} \centerline{Chixiang Chen$^1$, Ming Wang$^{1}$, Rongling Wu$^1$, Runze Li$^2$} \vspace{.4cm} \centerline{\it
$^1$ Division of Biostatistics and Bioinformatics, Department of Public Health Science,}\centerline{\it Pennsylvania State College of Medicine, Hershey, PA, U.S.A., 17033}\centerline{\it
$^2$ Department of Statistics and the Methodology Center,}\centerline{\it Pennsylvania State University, University Park, PA, U.S.A, 16802} \vspace{.55cm} \fontsize{9}{11.5pt plus.8pt minus
.6pt}\selectfont


\begin{quotation}
\noindent {\it Abstract:}\\
Conventional likelihood-based information criteria for model selection rely on the distribution assumption of data. However, for complex data that are increasingly available in many scientific fields, the specification of their underlying distribution turns out to be challenging, and the \red{existing criteria may be limited and are not general enough to handle a variety of model selection problems. 
Here, we propose \red{a robust and consistent model selection criterion} based upon the empirical \blue{likelihood function which is data-driven. In particular, this framework adopts plug-in estimators that can be achieved by solving external estimating equations, not limited to the empirical likelihood, which avoids potential computational convergence issues and allows versatile applications, such as generalized linear models, generalized estimating equations, penalized regressions and so on. The formulation of our proposed criterion is initially derived from the asymptotic expansion of the marginal likelihood under variable selection framework, but more importantly, the consistent model selection property is established under a general context.} Extensive simulation studies confirm the out-performance of the proposal compared to traditional model selection criteria. Finally, an application to the Atherosclerosis Risk in Communities Study illustrates the practical value of this proposed framework.}

\vspace{9pt}
\noindent {\it Key words and phrases:}
Model selection; Consistency; Empirical likelihood.
\par
\end{quotation}\par

\def\thefigure{\arabic{figure}}
\def\thetable{\arabic{table}}

\renewcommand{\theequation}{\thesection.\arabic{equation}}

\fontsize{12}{14pt plus.8pt minus .6pt}\selectfont

\setcounter{section}{1} 
\setcounter{equation}{0} 
\noindent {\bf 1. Introduction}

Model selection is a common problem encountered in various disciplines including variable selection in the mean structure, correlation structure selection for longitudinal data analysis, and tuning parameter selection in penalized regression among others. Currently, commonly used approaches for model selection rely on several likelihood-based information criteria, such as Akaike information criterion (AIC) \citep{akaike1974}, Bayesian information criterion (BIC) \citep{schwarz1978}, and Generalised Information Criteria (GIC) \citep{konishi1996}. \red{However, such information criteria critically depend upon the parametric distribution assumption, and have limited applications in more complicated model selection problems rather than variable selection \citep{chen2012}. More importantly, the distribution misspecification would have a negative impact on the selection performance and be inevitably encountered in practice when the data are complex and heterogeneous. For instance, in some survey studies, some variables such as Beck's depression index or caffeine/alcohol use could be highly skewed or over-dispersed due to sampling bias, thus a well-defined distribution is usually difficult to identify. However, these complex data, increasingly available due to advances in data collection techniques, play so critical roles in capturing fundamental principles underlying natural, social, and engineering processes that more advanced and rigorous approaches need to be adopted for valid inference.}

To avoid the distribution specification but still borrow the likelihood properties, a data-driven approach based on the empirical likelihood has been developed \citep{owen1988,qin1994}, and has been widely applied for data analysis and statistical inference \citep{owen2001}. However, the empirical likelihood-based information criteria for model selection are still not well studied in past decades. \cite{eic1995} first proposed empirical information criterion (EIC) based on the Kullback-Leibler distance between discrete empirical distributions. However, there is a severe convergence issue for empirical likelihood estimators to compute this criterion, thus limiting the application. \blue{To avoid computation issue, \cite{variyath2010} advocated an empirical AIC and an empirical BIC based on the adjusted empirical likelihood by incorporating an extra parameter \citep{chen2008adjusted, chen2020missing}. \red{However, such an approach only focuses on variable selection in the mean structure and still requires empirical likelihood estimators, and also how to select the nuisance parameter remains an issue for practical use.} We also refer to the literature where empirical likelihood-based criteria have been proposed in particular situations but without any theoretical justification \citep{chen2012, tang2010, chang2018}. To the best of our knowledge, few studies are investigated for empirical likelihood-based information criteria that can be broadly applicable to the general model selection not limited to variable selection.}

\red{Here, we consider a general model selection context with a collection of candidate models $M_1$, $M_2,\ldots, M_k$ with the true model included. The main objective is how to capture the true one among the candidates. Under the Bayesian paradigm with non-informative prior, the marginal likelihood of model selection is of main focus and shown as 
\begin{equation}\label{posterior}
P(\bfD|M)=\int P(\bfD|\bfgamma, M) P(\bfgamma|M)d\bfgamma \propto \int P(\bfD|\bfgamma) d\bfgamma,
\end{equation}    
\chixiang{where $\bfD$ is denoted as the full data.} $\bfgamma$ are the parameters in the candidate model $M$, and $P(\bfD|\bfgamma)$ is indeed the likelihood function $L(\bfgamma|\bfD)$. When $L(\bfgamma|\bfD)$ is fully specified, the well-known criterion $\textsc{BIC}=-2\log L(\hat{\bfgamma}|\bfD)+p\log n$ has been derived by selecting the model corresponding to the largest marginal probability in (\ref{posterior}). However, several restrictions exist: 1) the distribution needs to be pre-specified for the likelihood $L$ and 2) the estimator $\hat{\bfgamma}$ is MLE. In this work, we will present a robust and consistent information criterion, named empirical likelihood-based consistent information criterion (ELCIC), which targets model selection under the general contexts not limited to variable selection. In particular, the robustness indicates the distribution-free property and flexibility in wide application to the general model selection problems; the consistency is defined as the capture of the true model with probability tending to one, an independent and hot ongoing research topic in past decades \citep{ebic2008,variyath2010,kim2016consistent}. The likelihood part in ELCIC is purely data-driven based on the empirical likelihood, and we relax the procedure for parameter estimation by utilizing plug-in estimators to calculate this criterion, under which the consistency property can still hold under mild conditions. 

The rest of the paper is organized as follows. In Section \ref{section2}, to demonstrate the formulation of our proposed criterion, we provide the theoretical derivation under the variable selection framework by asymptotically expanding the marginal probability $(\ref{posterior})$. However, the consistency selection property of our proposed criterion is investigated under more general model selection settings under mild conditions, which is of the most importance and main goal in this article. Section \ref{simulation} considers three specific cases for illustration and evaluates the finite-sample performances via simulation studies. We further apply our proposal by a real data example in Section \ref{section4}. Last but not least, several promising extensions as future works are discussed in Section \ref{discussion}.}

\section{Methodology}\label{section2}
\subsection{Empirical Likelihood}    
Based upon the idea from the empirical distribution, \cite{owen1988} introduced an empirical likelihood approach to construct the likelihood-based confidence intervals. The full data is denoted by $\bfD=\{\bfD_i\}_{i=1}^{n}$ with $\bfD_i=(\bfX_i^\T,\bfY_i^\T)^\T$, assumed to be independent and identically distributed (i.i.d.) when a regular regression is considered. Given some estimating equations $\bfg(\bfD_i,\bfgamma)$ satisfying $E\bfg(\bfD_i,\bfgamma)=\mathbf{0}$ for $i=1,\ldots, n$, the empirical likelihood ratio is defined by
\begin{equation}\label{el}
R^F=\sup_{\bfgamma,p_1,\ldots, p_n}\left\{ \prod_{i=1}^{n} np_i; p_i\geq0, \sum_{i=1}^{n}p_i=1, \sum_{i=1}^{n}p_i\bfg\big(\bfD_i,\bfgamma\big)=\mathbf{0} \right\}.
\end{equation}
\red{Compared to the traditional likelihood, point mass probabilities for the observations are utilized instead here, thus the information from the data would be automatically and efficiently borrowed from the estimating equations constrain \citep{qin1994}, which is the desired property and shows its great potential for model selection. Given the estimator denoted by $\hat{\bfgamma}$ which will be discussed more next,
the negative logarithm of empirical likelihood ratio can be easily calculated below based on the Lagrange multiplier method \citep{owen2001}, \blue{
\begin{equation}\label{logl}
l=-\log R^F(\hat{\bflambda},\hat{\bfgamma})=\sum_{i=1}^{n}\log\{1+\hat{\bflambda}^\T\bfg(\bfD_i, \hat{\bfgamma})\},
\end{equation}}
where the parameter estimate $\hat{\bflambda}$ can be obtained by solving the following equations based on the Newton-Raphson method
\begin{equation}\label{lambda}
\frac{1}{n}\sum_{i=1}^{n}\frac{\bfg\big(\bfD_i,\hat{\bfgamma}\big)}{1+\boldsymbol{\lambda}^\T\bfg\big(\bfD_i,\hat{\bfgamma}\big)}=\mathbf{0}.
\end{equation}}
\vspace{-0.5in}
\subsection{Derivation of ELCIC under the Variable Selection Framework}\label{vs}
\blue{Before introducing our proposed model selection criterion and discussing its consistency, let us start with the variable selection in a regression framework to get some insights on its formulation.} Here, $\bfgamma$ is denoted as the parameter vector in the mean structure only. To implement the selection relying on the empirical likelihood, we need to specify a full set of estimating equations \citep{eic1995,variyath2010,chen2012}. For any $i=1,2,\ldots,n$, we define
\begin{equation}\label{g1g2}
\bfg(\bfD_i,\bfgamma)=\left(
\begin{matrix}
\bfg_1(\bfD_i,\tilde{\bfgamma})   \\
\bfg_2(\bfD_i,\tilde{\bfgamma})  \\
\end{matrix}
\right),
\end{equation}
where $\tilde{\bfgamma}=(\bfgamma^\T, \mathbf{0}^\T)^\T$ so that its dimension will be matched with the pre-specified full covariate matrix $\bfX_i$, and $\bfg_1(\bfD_i,\tilde{\bfgamma})$ and $\bfg_2(\bfD_i,\tilde{\bfgamma})$ correspond to the estimating equations for the parameters with and without involvement in a candidate model, respectively. In variable selection, we denote the cardinality of $\tilde{\bfgamma}$ as $L$ and that of $\bfgamma$ from a candidate model as $p$ with $0<p\leq L<\infty$. To be noted that the formula in (\ref{g1g2}) is constructed only for implementing the variable selection.

As aforementioned, we will encounter the computation issue when maximizing (\ref{el}) to obtain the empirical likelihood-based estimator denoted by $\hat{\bfgamma}_{EL}$. \red{It requires zero be inside of the convex hull of estimating equations to ensure the existence of solutions \citep{qin1994,chen2008adjusted}. To overcome this computation issue and make the criterion more versatile accommodating to various needs, we consider plug-in estimators instead, which in the variable selection framework are obtained from solving some external estimating equations $\sum^n_{i=1}\bfg_1(\bfD_i,\bfgamma)=\mathbf{0}$ in (\ref{g1g2}) to bypass the complex and also unstable estimation procedure from (\ref{el}). Given the plug-in estimators denoted as $\hat{\bfgamma}_{EE}$ and the corresponding Lagrange multiplier estimator \blue{$\hat{\bflambda}_{EE}$}, the negative logarithm of empirical likelihood ratio in (\ref{logl}) can be achieved.} Instead of utilizing the likelihood under a pre-specified distribution to maximize the marginal likelihood (\ref{posterior}), we employ the empirical likelihood ratio for $L(\bfgamma|\bfD)$ shown as below
\begin{equation}
L(\bfgamma|\bfD)=R^F(\bflambda,\bfgamma)=\prod_{i=1}^{n}\{1+\bflambda^\T \bfg(\bfD_i,\bfgamma)\}^{-1}.
\end{equation}

\redreview{Next, we provide two conditions to facilitate the asymptotic expansion of $P(\bfD|M)$. Note that we use $\lVert\cdot\rVert$ to denote the Euclidean norm and $\lvert\cdot\rvert$ as a determinant of a matrix. Also, for notation simplicity, the subscript for $\bfD$ is disregarded below due to the i.i.d. property, and also the expectations are all evaluated at the true parameter values under the correctly specified models.}

\begin{condition}\label{c1} (Regularities)
	Given the correctly specified model with the true parameter $\bfgamma_0$ \redreview{satisfying $E\big\{\bfg(\bfD, \bfgamma_0)\big\}=\mathbf{0}$}, \redreview{$E\big\{\bfg(\bfD, \bfgamma_0)\bfg^\T(\bfD, \bfgamma_0)\big\}$} is positive definite, and $\{\partial^2 \bfg(\bfD, \bfgamma)\}/(\partial\bfgamma^\T\partial \bfgamma)$ is continuous in the neighborhood of $\bfgamma_0$. Furthermore,  we assume $\lVert \{\partial\bfg(\bfD, \bfgamma)\}/(\partial\bfgamma^\T)\rVert$, $\lVert \{\partial^2 \bfg(\bfD, \bfgamma)\}/(\partial\bfgamma^\T \partial\bfgamma)\rVert$, and $\lVert  \bfg(\bfD, \bfgamma)\rVert^3$ are bounded by some integrable function around $\bfgamma_0$. 
\end{condition}

\begin{condition}\label{c2} (Efficiency)
	For the estimating equations $\bfg_1$ and $\bfg_2$ defined in (\ref{g1g2}) and given \redreview{the correctly specified model with the true parameter $\tilde{\bfgamma}_0=(\bfgamma_0^\T, \mathbf{0}^\T)^\T$}, \redreview{
	\begin{equation*}
	\begin{split}
	E\Big\{\frac{\partial\bfg_1(\bfD, \tilde{\bfgamma}_0)}{\partial\bfgamma^\T}\Big\}&=-E\big\{\bfg_1(\bfD, \tilde{\bfgamma}_0)\bfg_1^\T(\bfD, \tilde{\bfgamma}_0)\big\}, \\ 
	E\Big\{\frac{\partial\bfg_2(\bfD, \tilde{\bfgamma}_0)}{\partial\bfgamma^\T}\Big\}&=-E\big\{\bfg_1(\bfD, \tilde{\bfgamma}_0)\bfg_2^\T(\bfD, \tilde{\bfgamma}_0)\big\}.
	\end{split}
	\end{equation*}}
\end{condition}
\textit{Condition} \ref{c1} includes several regular moment conditions to ensure valid inference based on empirical likelihood \citep{qin1994}. To simplify the formula of our proposed criterion, \textit{Condition} \ref{c2} sets some constrains on these two estimating equations $\bfg_1$ and $\bfg_2$ in (\ref{g1g2}), which are related to the estimator efficiency. Note that \textit{Condition} \ref{c2} contains a family of estimating equations leading to asymptotically efficient estimators. For instance, when the score function is utilized, \textit{Condition} \ref{c2} is definitely satisfied by the property of \redreview{the Fisher information under regular conditions} \citep{pierce1982}; when $\bfg$ is generalized estimating equations with a correctly specified correlation structure \citep{liang1986}, \textit{Condition} \ref{c2} holds as well. However, \textit{Condition} \ref{c2} is not required for the proof of model selection consistency, which will be discussed more later.

\begin{theorem}\label{thm1}
	\redreview{Under \textit{Conditions} \ref{c1} and \ref{c2}, given $\hat{\bfgamma}_{EE}$ obtained from the estimating equations $\bfg_1$ in (\ref{g1g2}) and the rank of $E[\{\partial\bfg(\bfD, \bfgamma_0)\}/(\partial\bfgamma^\T)]$ as $p$ the same as the dimensionality of $\hat{\gamma}_{EE}$, and by applying the Laplace approximation and setting \redreview{a} non-informative prior to $\bfgamma$, we have
	\begin{equation*}
	\chixiang{-2\log P(\bfD|M)= -2\log R^F(\hat{\bflambda}_{EE},\hat{\bfgamma}_{EE})+p\log n+\tilde{C}+o_p(1),}
	\end{equation*}
   where
	$\tilde{C}=\log(\bfSigma_{21}\bfSigma^{-1}_{11}\bfSigma_{12})-p\log(2\pi)-2\log(\tilde{A})$
	with $\bfSigma_{11} =E\big\{\bfg(\bfD, \bfgamma_0)\bfg^\T(\bfD, \bfgamma_0)\big\}$;
	$\bfSigma_{12} =E\big[\partial\bfg(\bfD, \bfgamma_0)/\partial\bfgamma^\T\big]$,  $\bfSigma_{21} =\bfSigma_{12}^\T$;  $\tilde{A}=\int\exp\{(1/2)\bfdelta_1^\T(n\bfSigma_{11})\bfdelta_1\}\rho_{\bfdelta_1}(\bfdelta_1)d\bfdelta_1$; and also, $\rho_{\bfdelta_1}(\cdot)$ is some prior function of the random variable $\bfdelta_1$ defined in the Supplementary Material.} 
\end{theorem}
Based upon Theorem $\ref{thm1}$, our proposed ELCIC is defined by 
\begin{equation}
\textsc{ELCIC}=-2\log R^F(\hat{\bflambda}_{EE},\hat{\bfgamma}_{EE})+p\log n. 
\end{equation}
\chixiang{Noting that the proposed ELCIC is free of prior specification, a desired property for a well-defined model selection criterion, with finite values guaranteed regardless of the fact that $\tilde{A}$ in Theorem \ref{thm1}
might be infinity for some prior functions.}

 \redreview{Theorem $\ref{thm1}$ is derived under the framework of variable selection based on the Laplace approximation with some extra constraints on estimating equations specified in \textit{Condition} \ref{c2}. However, in next section, we will show that \textsc{ELCIC} is consistent, i.e., it can capture the true model with the probability approaching to one, and those conditions can be relaxed. Note that this consistency property holds under any general estimating equations, not limited to the ones defined in (\ref{g1g2}). We will rigorously explore the consistency of our \textsc{ELCIC} next. }

\subsection{Model selection consistency of \textsc{ELCIC}}
\redreview{In this section, we will focus on our proposal's consistency of general model selection not limited to variable selection.} \blue{Under this general context, let us redefine \redreview{$\bfg(\bfD,\bfgamma)$} as some full estimating equations in (\ref{el}) satisfying \textit{Condition} \ref{c1}, where the $p$-by-1 parameter vector $\bfgamma$ includes all the parameters in the estimating equations \redreview{$\bfg(\bfD,\bfgamma)$}, such as parameters in the mean structure, coefficients in the correlation matrix, or any other nuisance parameters. Notice that we do not assume that \redreview{$\bfg(\bfD,\bfgamma)$} correspond to the estimating equations for all the parameters in $\bfgamma$}. We further relax the assumption that the plug-in estimators $\hat{\bfgamma}_{EE}$ could be derived from other external estimating equations, not necessarily specific to (\ref{g1g2}), but with some mild conditions satisfied (i.e., \textit{Conditions} \ref{c5}). More concrete examples for specifying the function \redreview{$\bfg(\bfD,\bfgamma)$} will be discussed in Section \ref{case}.
\begin{theorem}\label{thm2}
	Under \textit{Condition} \ref{c1} and given the candidate model $M$ is correctly specified with the true value of $\bfgamma$ as $\bfgamma_0$,
		 \chixiang{let us} define $\bfQ_n= (1/n) \sum_{i=1}^n \bfg(\bfD_i,  {\bfgamma}_{0})$ $+ \bfSigma_{12}(\hat{\bfgamma}_{EE} -\bfgamma_{0})$ with plug-in estimators satisfying $\hat{\bfgamma}_{EE} -\bfgamma_{0}=O_p(\bfn^{-1/2})$, then the negative logarithm of the likelihood is
		\begin{equation}\label{loglike}
		l=\frac{1}{2} (\redreview{n^{1/2}}\bfQ_n^\T)\bfSigma^{-1}_{11} (\redreview{n^{1/2}}\bfQ_n)+o_p(1)
		\end{equation}
\end{theorem}

Theorem \ref{thm2} provides an insight of the order of $l$ when the model is correctly specified, which will be a crucial component for the derivation of the consistency of our proposed criterion. \blue{As a by-product of \chixiang{Theorem \ref{thm2}}, the Corollary below provides the asymptotic distribution of $2l$ in the variable selection framework with $\hat{\bfgamma}_{EE}$ from Section \ref{vs}.} 
\begin{corollary}\label{corollary}
 \redreview{Given the same conditions in Theorem \ref{thm2}, $\bfg$ and $\bfg_1$ defined in (\ref{g1g2}) in the case of variable selection and $\tilde{\bfgamma}_0= (\bfgamma_0^\T,\mathbf{0}^\T)^\T$, $\hat{\bfgamma}_{EE}$ satisfies}
		\begin{equation}\label{plug in}
		\hat{\bfgamma}_{EE}-\bfgamma_{0}=-\bigg\{\redreview{E\Big(\frac{\partial\bfg_1(\bfD,\tilde{\bfgamma}_0)}{\partial\bfgamma^\T}\Big)}\bigg\}^{-1}\frac{1}{n} \sum_{i=1}^n \bfg_1(\bfD_i,\tilde{\bfgamma}_{0})+o_p(\bfn^{-\frac{1}{2}}).  
		\end{equation}
		Then, we obtain $2l$ converges in distribution to $\sum_{j=1}^{\tilde{L}}\Lambda_j\chi_1^2$, where $\Lambda_1, \ldots,\Lambda_ {\tilde{L}}$ are non-zero eigenvalues of matrix $\bfOmega= \bfSigma_{11}^{1/2} \bfSigma_{\ast}^\T \bfSigma^{-1}_{11} \bfSigma_{\ast}\bfSigma_{11}^{1/2}$ with $\tilde{L}=\text{rank} (\bfOmega)$ and
		$\bfSigma_{\ast}= \bfI_{L\times L}-\Big(\bfSigma_{12}\redreview{\{E(\partial\bfg_1(\bfD,\tilde{\bfgamma}_0)/\partial \bfgamma^\T\big)\}^{-1}}, \boldsymbol{0}_{L\times(L-p)} \Big)$.
\end{corollary}

\begin{condition}\label{c3} (Regularity for Misspecified Model)
	Let $\bfgamma_*\neq\bfgamma_0$, then for any $\bfgamma$ in the neighbourhood of $\bfgamma_*$, we have $E\lVert \bfg(\bfD, \bfgamma)\rVert^{2+\delta} < \infty$ with some $\delta>0$.
\end{condition}
\begin{condition}\label{c4} (Identifiability)
	For any $\bfgamma$ in the neighbourhood of $\bfgamma_*\neq\bfgamma_0$, we have the following condition: $\lVert E \bfg(\bfD, \bfgamma)\rVert>0$.
\end{condition}

\textit{Condition} \ref{c3} is an extension of regularities in \textit{Condition} \ref{c1} when the candidate model $M$ is mis-specified. We require that the $(2+\delta)^{th}$ moment of estimating equations is finite. \textit{Condition} \ref{c4} is the identifiability assumption, which further implies that the model is identifiable if only the correctly specified model has $E \bfg(\bfD, \bfgamma_0)=\mathbf{0}$ satisfied. This is also the key to model selection.

\begin{theorem}\label{thm3}
	Under \textit{Conditions} \ref{c1}, \ref{c3} and \ref{c4}, for any $\bfgamma$ in the neighbourhood of $\bfgamma_*\neq\bfgamma_0$, we have $n^{1-c} \lVert\bar{\bfg}_n\rVert^2\log(n)l^{-1}=O_p(1)$, where $\bar{\bfg}_n=(1/n) \sum_{i=1}^n \bfg(\bfD_i,  \bfgamma)$ for $\frac{1}{2}<c<1$.
\end{theorem}

Theorem \ref{thm3} ensures that when the candidate model is misspecified, the negative log likelihood $l$ will tend to be infinity with the order at least $\log n$. Together with Theorem \ref{thm2} and Theorem \ref{thm3}, and the following \textit{Condition} \ref{c5}, we are ready to present the main result of this paper.

\begin{condition}\label{c5} (Well-behaved Estimator)
	Denote 
	$\bfV= \textbf{Cov} \big((1/n) \sum_{i=1}^n \bfg(\bfD_i, \bfgamma_0)$ $+\bfSigma_{12}(\hat{\bfgamma}_{EE}-\bfgamma_0)\big)$ with \redreview{$\hat{\bfgamma}_{EE} -\bfgamma_{0}=O_p(\bfn^{-1/2})$}. We have $tr(\bfSigma_{11}^{-1}\bfV)<\infty$, given the correctly specified candidate model.
\end{condition}

\textit{Condition} \ref{c5} is a mild condition requiring well-behaved plug-in estimators. If this plug-in estimator $\hat{\bfgamma}_{EE}$ is from Corollary \ref{corollary}, then $tr(\bfSigma_{11}^{-1}\bfV)<\infty$ in \textit{Condition} \ref{c5} will be equivalent to ask for finite eigenvalues of $\bfOmega$ defined in Theorem \ref{thm2}. On the other hand, when $\bfg(\bfD,\bfgamma)$ are the estimating equations for $\bfgamma$, and the estimator $\hat{\bfgamma}_{EL}$ are obtained from maximizing empirical likelihood ratio (\ref{el}), we will have $tr(\bfSigma_{11}^{-1}\bfV)=L-p$, which satisfies \textit{Condition} \ref{c5}.

\begin{theorem}\label{thm4}
	Under \textit{Conditions} \ref{c1}, \ref{c3}, \ref{c4} and \ref{c5} and given the true model denoted by $M_0$,  
	$P\big[\min\{\textsc{ELCIC}(M):M \neq M_0\}>\textsc{ELCIC}(M_0)\big]\rightarrow 1
	\text{ as } n\rightarrow \infty$. 
\end{theorem}

\red{The proof strategy is standard large sample theory; however, there are many important findings from Theorem \ref{thm4} revealing the underlying merits and implications of ELCIC for the general model selection purpose. \redreview{First, the theorem holds under very mild conditions, in particular, not involving \textit{Condition} \ref{c2} and the Laplace approximation. Moreover, the proof does not rely on a very specific form of full estimating equations, not necessarily limited to the estimating equations $\bfg(\bfD_i,\bfgamma)$ in (\ref{g1g2}). It implies the potential of ELCIC to various model selection problems, rather than solo variable selection. Second, there are few restrictions for the plug-in estimators except \textit{Condition} \ref{c5}. In practice, we might apply some common estimation procedures such as least square, score functions, generalized estimating equations, loss functions and so on, making ELCIC more flexible and versatile. It is noted that the parameters $\bfgamma$ are not only limited to primary parameters of interest in the candidate model but also could include other nuisance parameters, such as the correlation coefficients in the generalized estimating equation method for longitudinal data and the parameters in logistic regressions for observing probabilities in the inverse probability weights method. Therefore, the consistency property in Theorem \ref{thm4} allows ELCIC to deal with broad model selection problems with three case studies shown next.}}


\section{Case Studies and Numeric Results}\label{simulation}
\subsection{Full Estimating Equations} \label{case}
We will discuss how to specify full estimating equations $\bfg(\bfD_i,\bfgamma)$ for ELCIC in three cases. To avoid confusion, we denote the parameter vector in the mean structure as $\bfbeta$ and the overall parameter vector in the estimating equations as $\bfgamma$. 

\textbf{Case 1: Generalized Linear Models (GLM).} \cite{glm} introduced the GLM concept to unify the theory for different models in categorical analysis. In this case, the full estimating equations $\bfg$ in (\ref{el}) can be simply defined as the score functions:
\begin{equation}\label{poi}
\bfg(\bfD_i,\bfbeta)=\bfX_i(Y_i-\mu_i(\tilde{\bfbeta}))
\end{equation}
where $\mu_i(\tilde{\bfbeta})$ with $\tilde{\bfbeta}=(\bfbeta^\T,\mathbf{0}^\T)^\T$ is the conditional expectation of $Y_i$ modeled by $f(\bfX_i^T\tilde{\bfbeta})$ with some pre-specified canonical link function $f$. Since the equation (\ref{poi}) is valid only when the mean structure is correctly specified and does not require the second moment, ELCIC under the full estimating equations (\ref{poi}) is capable to handle the scenario when the variance structure is mis-specified, such as over-dispersion, which is often encountered in count data analysis. Only variable selection is of research interest in this case.

\textbf{Case 2: Generalized Estimating Equations (GEE).} Now we extend our focus to model selection for longitudinal data. \cite{liang1986} introduced the marginal model to conduct statistical inference without specifying the joint distribution of longitudinal data. To be noted, a correctly specified mean structure is always a key for estimation consistency, and in the meantime, it can further gain the efficiency by identifying correct ``working" correlation structure in GEE approach \citep{mccullagh1989}. In this case, we will properly specify the full estimating equations so that ELCIC can simultaneously select marginal mean and correlation structures, while the main existing criteria, such as quasi-likelihood criteria (QIC) \citep{pan2001akaike}, cannot handle the joint selection.   

To achieve our goal and for simplicity, let us assume a balanced design with $T$ observations for each subject. For subject $i$, the marginal mean is denoted as $\bfmu_i$ and variance-covariance matrix as $\bfV_i$. The over-dispersion parameter is noted by $\phi$ (assumed known but can also be consistently estimated), and the correlation coefficient vector as $\bfrho^c=(\rho_1^c,\ldots,\rho_{T-1}^c)^{\T}$. Here, the superscript $c$ indicates the type of correlation structure. For instance, under a stationary structure, we have $\bfrho^{STA}=(\rho_1^{STA},\ldots,\rho_{T-1}^{STA})^{\T}$. Thus, the full estimating equations in (\ref{el}) can be defined as 
\begin{align}\label{longfull}
\bfg\Big(\bfD_i,\bfbeta,\bfrho^c\Big)=&
\begin{pmatrix} 
\bfH_i^\T \bfV_i^{-1}\Big(\bfY_i-\bfmu_i(\tilde{\bfbeta})\Big)\\  
\bfU_i(\tilde{\bfbeta})-\boldsymbol{h}(\bfrho^c)\phi
\end{pmatrix},
\end{align}
where $\tilde{\bfbeta}$ is defined as $(\bfbeta^\T,\mathbf{0}^\T)^\T$, $\bfH_i$ denotes the first derivative of $\bfmu_i$ with respect to $\tilde{\bfbeta}$; and $\bfU_i(\tilde{\bfbeta})=\big(U_{i1}(\tilde{\bfbeta}), U_{i2}(\tilde{\bfbeta}),\ldots, U_{i(T-1)}(\tilde{\bfbeta})\big)^\T$ with
\begin{equation}\label{stationary}
U_{im}(\tilde{\bfbeta})=\sum_{j=1}^{T-m}e_{ij}(\tilde{\bfbeta})e_{i,j+m}(\tilde{\bfbeta}),~ \text{for }~m=1,\ldots,T-1.
\end{equation}
Also, $e_{ij}$ represents the standardized residual term $(y_{ij}-\mu_{ij})/\surd{\nu_{ij}}$ , for $i=1,\ldots,n$ and $j=1,\ldots, T$. Finally, $\boldsymbol{h}(\bfrho^c)$ is defined as $\big(\rho_1^c\big(T-1-p/n\big),\ldots,$ $ \rho_{T-1}^c(1-p/n)\big)^\T$.

To be noted, $\tilde{\bfbeta}$ is proposed to achieve variable selection in marginal mean structure, and a stationary correlation structure is utilized based on (\ref{stationary}) to implement the correlation structure selection. Thus, ELCIC can simultaneously select marginal mean and correlation structures since only when both structures are correctly specified, the expectation of full estimating equations (\ref{longfull}) equal zeros. Of note is that \cite{chen2012} proposed empirical likelihood BIC to select ``working" correlation structure alone, under which the full estimating equations is the subset of our proposed one (\ref{longfull}). Accordingly, ELCIC unifies the selection procedure by allowing for both marginal mean and correlation structures. Besides, the theorems in the Supplementary Material further provide a theoretical justification of this criterion, which is lacking in current literature. 

\textbf{Case 3: Penalized Generalized Estimating Equations (PGEE).} Penalized regression is one of the most popular research topics in the past two decades \citep{lasso1996, scad}. It utilizes some penalties to shrink the effect of unnecessary features toward zero by identifying a proper tuning parameter. In this case, we mainly focus on the tuning parameter selection. There are two common ways to choose the tuning parameter: cross-validation or some BIC-type methods \citep{ebic2008}. It is well known that cross-validation will lead to high false positive, while the BIC-type method is less time consuming and tends to have a lower false positive rate. However, BIC-type criteria cannot be applied to semi-parametric or non-parametric contexts. Here, we consider PGEE proposed by \cite{wanglan2012}, where BIC is no longer suitable, but ELCIC can be easily embedded into this framework.  

\textsc{PGEE} is the combination of \textsc{GEE} and the first derivative of smoothly clipped absolute deviation (\textsc{SCAD}) penalty, which could facilitate the sparsity in the marginal mean structure. The selection consistency and asymptotic normality were investigated in \cite{wanglan2012} with a diverging number of covariates. Here, we only consider the cases with a fixed number of covariates. The full estimating equations $\bfg(\bfD_i,\bfgamma)$ in (\ref{longfull}) will be utilized to implement model selection. To be noted again, the consistency of \textsc{ELCIC} doesn't require the estimating equations for $\hat{\bfbeta}_{EE}$ to be contained in $\bfg(\bfD_i,\bfgamma)$, which theoretically justifies the application of \textsc{ELCIC} to penalized regression cases. Moreover, based on the selection consistency and asymptotic normality in \cite{wanglan2012}, there exists a tuning parameter which will correctly identify true zeros with probability tending to one and also make the non-zero part of the estimators converge to the true one with order $O_p(n^{-\frac{1}{2}})$ under fixed $p$. Therefore, we can apply \textsc{ELCIC} to locate such ``optimal" tuning parameter. What's more, the same rationale to Case 2, the full estimating equations $\bfg(\bfD_i,\bfgamma)$ in (\ref{longfull}) could facilitate joint selection of marginal mean and ``working" correlation structures, which is unfeasible based on cross-validation (CV).

\redreview{Besides the three traditional cases we presented above, ELCIC has the potential to deal with the complicated scenarios where the existing criteria might not fit well or even not be applicable at all. For instance, the variable selection for the augmented inverse probability weighting (AIPW) method that is commonly used for missing data analysis \citep{robins1994estimation} with extensive work in longitudinal data, survival analysis and causal inference \citep{bang2005doubly,seaman2009doubly,scharfstein1999adjusting,long2011}. We provide the detailed discussion for such cases with numerical evaluation of our proposal, which are displayed in the Supplementary Material. }  

\subsection{Numeric Results}\label{section 4}
In this subsection, we will conduct simulation studies under the three cases in Section \ref{case}, and compare ELCIC with some popular existing criteria in each case to show the robustness of our proposed criteria. \redreview{Due to space limit, the extra simulation studies for variable selection under the AIPW framework is presented in the Supplementary Material.}

\begin{table}[ht]
	\centering
	\caption{Performance of ELCIC compared with AIC and BIC for the scenarios under Poisson distribution with potential over-dispersed outcomes. 500 Monte Carlo data are generated with sample size $n=100, 200$. The model with $\{x_1,x_2\}$ is the true one. \redreview{NB: Negative bionomial with $k$ represented as the number of failures} }
	\scalebox{0.85}{%
	\begin{minipage}[c]{1.3\textwidth}
	\begin{tabular}{lllccccccc}
		\toprule
		\multirow{2}{*}{Sample Size} & \multirow{2}{*}{Distribution} & \multirow{2}{*}{Criteria} & \multicolumn{7}{c}{Candidate Models} \\ 
		&                     &                         & $x_1$  & $x_2$  & $x_3$ & \boldsymbol{$x_1,x_2$}      & $x_1,x_3$ & $x_2,x_3$ & $x_1,x_2,x_3$ \\ \midrule
		$n=100$ & POISSON             & AIC                     & 0     & 0     & 0    & 0.852          & 0         & 0         & 0.148          \\
		&                     & GIC                     & 0     & 0     & 0    & 0.798  & 0         & 0         & 0.202           \\
		&                     & BIC                     & 0     & 0     & 0    & {0.980}  & 0         & 0         & 0.020           \\
		&                     & ELCIC                    & 0     & 0     & 0    & 0.95           & 0         & 0         & 0.050           \\
		& NB $k=8$       & AIC                     & 0     & 0     & 0    & 0.786          & 0.002     & 0         & 0.212          \\
		&                     & GIC                     & 0     & 0     & 0    & 0.792  & 0         & 0         & 0.206          \\
		&                     & BIC                     & 0     & 0     & 0    & 0.926          & 0.002     & 0         & 0.072          \\
		&                     & ELCIC                    & 0.002 & 0.002 & 0    & {0.940}  & 0.002     & 0         & 0.054          \\
		& NB $k=2$       & AIC                     & 0.002 & 0.002 & 0    & 0.592          & 0.006     & 0.002     & 0.396          \\
		&                     & GIC                     & 0.004     & 0.008    & 0    & 0.714  & 0.012         & 0.006         & 0.256           \\
		&                     & BIC                     & 0.004 & 0.008 & 0    & 0.774          & 0.012     & 0.002     & 0.200            \\
		&                     & ELCIC                    & 0.022 & 0.052 & 0    & {0.850}  & 0.018     & 0.002     & 0.056          \\ \midrule
		$n=200$ & POISSON & AIC  & 0     & 0     & 0    & 0.830           & 0         & 0         & 0.170           \\
		&                     & GIC                     & 0     & 0     & 0    & 0.800  & 0         & 0         & 0.200           \\
		&                     & BIC                     & 0     & 0     & 0    & {0.978} & 0         & 0         & 0.002          \\
		&                     & ELCIC                    & 0     & 0     & 0    & 0.962          & 0         & 0         & 0.038          \\
		& NB $k=8$      & AIC                     & 0     & 0     & 0    & 0.718          & 0         & 0         & 0.282          \\
		&                     & GIC                     & 0     & 0     & 0    & 0.776  & 0         & 0         & 0.224           \\
		&                     & BIC                     & 0     & 0     & 0    & 0.916          & 0         & 0         & 0.084          \\
		&                     & ELCIC                    & 0     & 0     & 0    & {0.952} & 0         & 0         & 0.048          \\
		& NB $k=2$       & AIC                     & 0     & 0     & 0    & 0.562          & 0.002     & 0         & 0.436          \\
		&                     & GIC                     & 0     & 0     & 0    & 0.764  & 0.002         & 0         & 0.234           \\
		&                     & BIC                     & 0     & 0     & 0    & 0.814          & 0.002     & 0         & 0.184          \\
		&                     & ELCIC                    & 0.002 & 0     & 0    & {0.946} & 0.004     & 0         & 0.048\\ \midrule
		$n=400$	&	POISSON	&	AIC &	0	&	0	&	0	&	0.848	&	0	&	0	&	0.152	\\
		&		&	GIC 	&	0	&	0	&	0	&	0.810	&	0	&	0	&	0.190	\\
		&		&	BIC &	0	&	0	&	0	&	0.990	&	0	&	0	&	0.010	\\
		&		&	ELCIC 	&	0	&	0	&	0	&	{0.990}	&	0	&	0	&	0.010	\\
		&	NB $k=8$	&	AIC  &	0	&	0	&	0	&	0.746	&	0	&	0	&	0.254	\\
		&		&	GIC  &	0	&	0	&	0	&	0.820	&	0	&	0	&	0.180\\
		&		&	BIC 	&	0	&	0	&	0	&	0.938	&	0	&	0	&	0.062	\\
		&		&	ELCIC &	0	&	0	&	0	&	{0.968}	&	0	&	0.002	&	0.030	\\
		&	NB $k=2$	&	AIC &	0	&	0	&	0	&	0.576	&	0	&	0	&	0.424	\\
		&		&	GIC 	&	0	&	0	&	0	&	0.772	&	0	&	0	&	0.228	\\
		&		&	BIC  	&	0	&	0	&	0	&	0.804	&	0	&	0	&	0.196	\\
		&		&	ELCIC &	0	&	0	&	0	&	{0.980}	&	0	&	0	&	0.020	\\
		\bottomrule         
	\end{tabular}
    \end{minipage}%
}
	\label{table1}
\end{table}

\textbf{Case 1}: \blue{The main goal in this case is to tell how distribution misspecification would affect the variable selection.} We narrow our focus to the Poisson regression, a special case of GLM. The true mean structure is 
\begin{equation*}
\log(\mu_{i})=\beta_0+\bfX_i^\T\bfbeta \qquad \text{for}\quad i = 1,\ldots,n,
\end{equation*}
where $\beta_0=0.5$, $\bfbeta=(0.5,0.5,0)^\T$. $\bfX_i$ is a covariate vector from a three-dimensional multivariate normal distribution $\textsc{MVN}(\bf0, \bfV)$, where the variance-covariance matrix $\bfV$ is an AR1 matrix with the variance as $1$ and the correlation coefficient as $0.5$. Also, we apply negative binomial as the true distribution with its number of failures $k=2,8$ to account for the case where the variance in the Poisson distribution is misspecified. However, we utilize AIC, GIC, and BIC for variable selection under the assumption of the Poisson distribution and apply ELCIC under the full estimating equations $\bfg$ specified in (\ref{poi}). The correct specification of the Poisson distribution is also considered as a benchmark. 500 Monte Carlo datasets with sample size $n = 100$, $200$, $400$ are generated, and the selection rates for each candidate model are reported for comparison. Based on Table \ref{table1}, we find out that when the variance structure is correctly specified, ELCIC is comparable to BIC, with a slightly worse performance, which is understandable since BIC incorporates all the likelihood information we need about the data. When the variance structure is misspecified, however, we observe that ELCIC reverses the situation and behaves much more robust than AIC and BIC as $n$ increases. More advantages can be gained by ELCIC when the data have higher over-dispersion. On the other hand, compared to the ELCIC, the consistency property does not hold for the AIC and GIC. Moreover, although GIC somehow relaxes the distribution assumption and performs more robust than AIC, it is still sensitive to the distribution misspecification compared with ELCIC even under a relatively large sample size. More discussions can be referred to Section \ref{discussion}.   

\begin{table}
	\centering
	\renewcommand\arraystretch{0.8}
		\caption{Performance of ELCIC compared with QIC for the scenarios under longitudinal count data. 500 Monte Carlo datasets are generated with sample size $n=100, 300$ and the number of observations within-subject $T=3, 5$. The model with $\{x_1,x_2\}$ and an exchangeable (EXC) correlation structure with the correlation coefficient $\rho=0.5$ is the true model. \redreview{AR1: auto-correlation 1; IND: independence; QIC/b: QIC with the BIC penalty.}}
	\begin{tabular}{lllcccccc}
		\toprule
		\multirow{2}{*}{Set-ups} & \multirow{2}{*}{Criteria} & \multirow{2}{*}{} & \multicolumn{6}{c}{Candidate Models} \\ 
		&   &  & $x_1,x_2,x_3$ & \boldsymbol{$x_1,x_2$} & $x_1,x_3$ & $x_2,x_3$ & $x_1$ & $x_3$  \\ \midrule
		$n=100$ &    ELCIC    &    {EXC}    &    0.040    &    {0.844}    &    0    &    0.002    &    0    &    0    \\
		$T=3$&        &    AR1    &    0.008    &    0.106    &    0    &    0    &    0    &    0    \\
		&        &    IND    &    0    &    0    &    0    &    0    &    0    &    0    \\
		&    QIC    &    {EXC}    &    0.090    &    {0.494}    &    0    &    0    &    0    &    0    \\
		&        &    AR1    &    0.044    &    0.372    &    0    &    0    &    0    &    0    \\
		&        &    IND    &    0    &    0    &    0    &    0    &    0    &    0    \\
		& QIC/b & {EXC} & 0.012 & {0.570} & 0  & 0.002 & 0  & 0  \\
 &       & AR1 & 0.018 & 0.398 & 0  & 0  & 0  & 0  \\
 &       & IND & 0  & 0  & 0 & 0  & 0  & 0 \\
		\midrule
		$n=300$ & ELCIC & {EXC} & 0.026 & {0.958} & 0 & 0 & 0 & 0 \\
		$T=3$ &  & AR1 & 0 & 0.016 & 0 & 0 & 0 & 0 \\
		&  & IND & 0 & 0 & 0 & 0 & 0 & 0 \\
		& QIC & {EXC} & 0.078 & {0.574} & 0 & 0 & 0 & 0 \\
		&  & AR1 & 0.030 & 0.318 & 0 & 0 & 0 & 0 \\
		&  & IND & 0 & 0 & 0 & 0 & 0 & 0 \\
		& QIC/b & {EXC} & 0.012 & {0.640} & 0 & 0 & 0 & 0 \\
 &       & AR1 & 0.002 & 0.346 & 0 & 0 & 0 & 0 \\
 &       & IND & 0     & 0     & 0 & 0 & 0 & 0\\\midrule
		$n=100$  &	ELCIC	&	{EXC}	&	0.052	&	{0.946}	&	0	&	0	&	0	&	0	\\
$T=5$ &		&	AR1	&	0	&	0.002	&	0	&	0	&	0	&	0	\\
&		&	IND	&	0	&	0	&	0	&	0	&	0	&	0	\\
&	QIC	&	{EXC}	&	0.102	&	{0.834}	&	0	&	0	&	0	&	0	\\
&		&	AR1	&	0.006	&	0.058	&	0	&	0	&	0	&	0	\\
&		&	IND	&	0	&	0	&	0	&	0	&	0	&	0	\\
& QIC/b & {EXC} & 0.016 & {0.920} & 0 & 0 & 0 & 0  \\
 &       & AR1 & 0  & 0.064 & 0 & 0  & 0  & 0  \\
 &       & IND & 0  & 0 & 0 & 0 & 0  & 0\\\midrule
$n=300$  &	ELCIC	&	{EXC}	&	0.02	&	{0.98}	&	0	&	0	&	0	&	0	\\
$T=5$ &		&	AR1	&	0	&	0	&	0	&	0	&	0	&	0	\\
&		&	IND	&	0	&	0	&	0	&	0	&	0	&	0	\\
&	QIC	&	{EXC}	&	0.098	&	{0.894}	&	0	&	0	&	0	&	0	\\
&		&	AR1	&	0.002	&	0.006	&	0	&	0	&	0	&	0	\\
&		&	IND	&	0	&	0	&	0	&	0	&	0	&	0	\\
 & QIC/b & {EXC} & 0.008 & {0.984} & 0 & 0 & 0 & 0 \\
 &       & AR1 & 0     & 0.008 & 0 & 0 & 0 & 0 \\
 &       & IND & 0     & 0     & 0 & 0 & 0 & 0\\
		\bottomrule
	\end{tabular}
		\label{table2}
	\end{table}

\textbf{Case 2}: We apply our proposed criterion to the GEE framework and make comparisons with one popular measure QIC. Suppose that the true underlying correlation structure is exchangeable (EXC) with the correlation coefficient $\rho = 0.5$. We assume count outcomes with the true marginal mean defined as
\begin{equation*}
\log(\mu_{ij})=\beta_0+x_{i1}\beta_1+x_{ij2} \beta_2 \qquad \text{for}\quad i = 1,\ldots,n, ~j = 1,\ldots, T,
\end{equation*}
where $x_{i1}$ is the subject (cluster) level covariate generated from an uniform distribution $\textsc{U}[0,1]$ and $x_{ij2} = j - 1$ is a time-dependent covariate. A redundant covariate $X_{ij3}$ is generated from a standard normal distribution $\textsc{N}(0,1)$. The number of observations (i.e., cluster size) is $T = 3$. The true parameters $\boldsymbol{\beta}=(-1,1,0.5)^\T$. 

As we discussed before, ELCIC with the full estimating equations (\ref{longfull}) could simultaneously select the marginal mean and correlation structures. However, QIC is not powerful enough to implement joint selection. Thus, we first apply the correlation information criteria (CIC) \citep{hin2009working} to identify the correlation structure under the full marginal mean structure and then implement QIC to select variables in marginal mean under the selected correlation structure. \redreview{We also compare the performance with QIC/b where the penalty of QIC is replaced by BIC penalty}. 500 Monte Carlo datasets with sample size $n=100$ or $300$ and observation times $T=3$ or $T=5$ are generated, and the selection rates for each combination of marginal mean and correlation structures are summarized in Table \ref{table2}. It is easy to recognize that the two-stage selection procedure based on CIC and QIC \redreview{(QIC/b)} are less powerful, in particular for $T=3$, since the second stage for variable selection heavily relies on the first stage for correlation structure selection. On the other hand, our ELCIC still keeps a much higher selection rate across various scenarios, which shows more robustness for model selection in longitudinal data framework. More detailed discussion can be seen in Section \ref{discussion}.   

To see the flexibility of ELCIC to handle nuisance parameters, we also implement the variable selection by using the first part estimating equations in (\ref{longfull}) as our $\bfg(\bfD_i,\bfgamma)$, thus regarding correlation coefficients as nuisance parameters. We observe a higher selection rate performance compared to QIC. The results are displayed in the Supplementary Material. 

\begin{table}
	\centering
	\caption{Performance of ELCIC compared with cross-validation (CV) method for the scenarios under longitudinal count data. 500 Monte Carlo data are generated with sample size $n=100, 200$ and the number of observations within-subject $T=3$.}
	\begin{threeparttable}
		\begin{tabular}{p{1.6cm}p{1.3cm} p{0.9cm}  p{0.9cm} p{0.9cm} p{0.9cm} p{0.9cm} p{0.9cm} p{0.9cm}}
			\toprule
			Sample Size & Criteria  & MS & FP & OVS & OVOS & CS & JS & VOS  \\\midrule
			$n=100$ & CV & 0.017 & 1.710 & 0.225 & 0.775 &  --  &   --   &     --  \\
			& ELCIC$_1$     & 0.016 & 0.595 & 0.575 & 0.425 &    --   &  --    &--      \\
			& ELCIC$_2$   & 0.016 & 0.610 & 0.565 & 0.435 & 0.91  & 0.52 & 0.39  \\ \midrule
			$n=200$ & CV & 0.007 & 1.825 & 0.190 & 0.810 &   --    & --     &    --   \\
			& ELCIC$_1$     & 0.007 & 0.405 & 0.655 & 0.345 &    --   &  --    &  --     \\
			& ELCIC$_2$   & 0.007 & 0.405 & 0.655 & 0.345 & 0.975 & 0.64 & 0.335\\  \bottomrule
		\end{tabular}
	\end{threeparttable}
	\label{table3}
	\begin{tablenotes}
		\item Note: MS: consistency $\lVert\hat{\bfbeta}-\bfbeta_0\lVert^2$; FP: average number of falsely selecting the non-zero variables; OVS: number of selecting the true mean structure/500; OVOS: number of over selecting the mean structure/500; CS: number of selecting the true correlation structure/500; JS: number of joint selecting true mean and correlation structures/500; VOS: number of over selecting the mean structure under true correlation structure selected/number of selecting the true correlation structure.
	\end{tablenotes}
\end{table}

\textbf{Case 3}: We simulate data from the model $\bfY_i=\bfX_i^T\bfbeta+\bfepsilon_i$, $i=1,\ldots,n$ where $\bfbeta=(0.5,0.5,0.5,0,0,0,0)^\T$. Note $\bfX_i$ is a $T\times 7$ matrix from a multi-variate normal distribution $\textsc{MVN}(\bf0, \bfV)$, where $T=3$, and the variance covariance matrix $\bfV$ is an AR1 matrix with the variance as 1 and the correlation coefficient as $0.5$. The random errors $\{\bfepsilon_i\}$'s are also generated from a multi-variate normal distribution, with zero-mean and an exchangeable covariance matrix with $\sigma^2=1$ and $\rho=0.5$. 500 Monte Carlo datasets with sample size $n=100$ or $200$ are generated, and we record several important measures such as consistency, false positive, overall variable selection rate, overall variable over-selection rate, correlation structure selection rate, joint selection rate, and variable over-selection rate given the true correlation structure is selected. Both ELCIC$_1$ and ELCIC$_2$ utilize the full estimating equations (\ref{longfull}). Note that ELCIC$_1$ is calculated under the true correlation structure $\bfrho^{EXC}$ to compare with CV for variable selection, but ELCIC$_2$ is for jointly selecting marginal mean and correlation structures. In Table \ref{table3}, we find that ELCIC$_1$ tends to have substantial lower false positive and higher variable selection rate compared to CV-based method, and ELCIC$_2$ also has satisfactory performance for simultaneous selection where CV cannot be applicable. 

\section{Real Data Example}\label{section4}
We apply our method to the Atherosclerosis Risk in Communities Study (ARIC), originally designed to investigate the causes of atherosclerosis and its clinical outcomes, the trends in rates of hospitalized myocardial infarction and coronary heart diseases. The platelet counts are our outcomes, which have been studied in the literature and realized to be an essential factor in coronary heart diseases \citep{renaud1992wine}. The objective of this application is to investigate the temporal pattern of platelet count and identify potential risk factors among various baseline variables, such as age (year), gender (female/male), diabetes (1=yes/0=no), smoker (1=yes/0=no), body mass index (kg), total cholesterol (mmol/L), total triglycerides (mmol/L), and time-dependent variable visit (coded as 0,1,2,3). We select Washington County to identify a total of 1,463 white patients at approximately three-year intervals (1987-1989, 1990-1992, 1993-1995, and 1996-1998) who were diagnosed with hypertension at the first examination. Note that there were 441 dropouts during the follow-up. To illustrate the application of our proposal, we assume missing completely at random for simplicity; however, more sophisticated manipulation of missing data can be referred to \cite{chen2018empirical}. We apply the full estimating equations (\ref{longfull}) with GEE in Case 2 to construct ELCIC and facilitate joint selection of marginal mean and correlation structures. The results are summarized in Table \ref{table4}, indicating that the marginal mean including time, gender, age, diabetes, and cholesterol with the AR1 correlation structure is recommended as the optimal model by both QIC and ELCIC. Note that the variables selected by ELCIC match the significant ones when we fit the full model (Model 1). The same marginal mean model is identified by the PGEE procedure from Case 3 using the same covariate pool.

\begin{table}
	\centering
	\caption{Analysis of the ARIC study based on six candidate marginal mean and potential correlation structures \redreview{EXC: exchangeability; AR1: auto-correlation 1; IND: independence}}
	\scalebox{0.88}{%
	\begin{minipage}[c]{1.2\textwidth}
		\begin{tabular}{p{1.2cm}p{0.5cm} p{1.8cm} p{1.8cm}p{1.8cm}p{1.8cm}p{1.8cm}p{1.8cm}}
			\toprule
			Variables & & Model 1 & Model 2 & Model 3  & Model 4 & Model 5  & Model 6 \\\midrule
			Time &  & -0.025 (0.002)* & -0.025 (0.002)* & -0.025 (0.002)* & -0.026 (0.002)* & -0.026 (0.002)* & -0.025 (0.002)* \\
			Gender && 0.167 (0.014)*  & 0.174 (0.013)*  & 0.171 (0.013)*   & 0.171 (0.013)*   & 0.162 (0.013)*  & 0.167 (0.014)*  \\
			Smoke&  & 0.017 (0.014)     &   &   &  &  & 0.016 (0.014)     \\
			Age(year) &   & -0.032 (0.012)*  &  & -0.029 (0.012)*   & -0.026 (0.012)*   & -0.033 (0.012)*  & -0.033 (0.012)*  \\
			Diabetes &   & -0.058 (0.020)*  &   &   & -0.054 (0.019)*  & -0.055 (0.019)*  & -0.056 (0.019)*  \\
			BMI &   & 0.001 (0.001)     &   &      &   &   &   \\
			Cholesterol &   & 0.024 (0.006)*  &    &   &    & 0.025 (0.006)* & 0.024 (0.006)*  \\
			Triglycerides&    & 0.002 (0.005)  &  &  &  &  & 0.002 (0.005)\\ 
			\midrule
			\textbf{ELCIC} &EXC& 91.8&    81.6&    82.9&    80.3&    69.5&    84.4\\
			&  AR1&     86.7&    84.5&    85.4&    85.4&    \textbf{68.4}&    80.6\\
			&  IND&     963.1&    893.4&    949.4&    959.1&    956&    948.8\\
			\textbf{QIC}& AR1&	-183571.2&	-183485.4&	-183508.6&	-183529&	\textbf{-183572}&	-183571.6\\ 
			\bottomrule
		\end{tabular}
    \end{minipage}%
}
	\label{table4}
	\begin{tablenotes}
		\item \redreview{Note that estimations with standard errors (in parenthesis) are reported; the criteria of ELCIC and QIC are summarized for model selection;	*denotes the P-value$<0.05$.}
	\end{tablenotes}
\end{table}

\section{Discussion}\label{discussion}

We are in the midst of big data, and increasing amounts of such data have been available in many scientific fields. Due to complex features and data structure, the existing traditional approaches for analysis and inference based on the specification of distribution may not work well. In this work, we present a data-driven information criterion framework for model selection under different contexts. By further relaxing the estimation procedure, our ELCIC overcomes the limitation of classic empirical likelihood-based criteria, \redreview{and more importantly, it is versatile for extension depending on model selection needs in practice, where no existing information criteria would theoretically fit well, such as the Case 3 and the extra cases shown in the Supplementary Material. Noted that, in the main context, we assume that the full model is correctly specified. however, by further checking Condition 1 and Condition 4, we want to highlight that the identifiability of ELCIC indeed does not require the correct specification of the full model. We only need the existence of the parameters which will make the full estimating equations equal to zeros. Consequentially, when the full model we utilize is misspecified, ELCIC will still work but locating the "true" one which leads the current estimating equations equal to zeros. \chixiang{However, the existence of such a "true" value is not theoretically trivial and also not always guaranteed, which may need more investigation in future work.} Thus, in practice, we recommend users to input all potentially important variables so that the selected full model could have good fit to the data, if enough sample size is provided. }

\blue{Through extensive numerical studies, the out-performance of our ELCIC can be demonstrated compared to the other alternatives.} GIC tends to be somewhat sensitive to the distribution misspecification, even under a large sample size. The reason might be two-folded: 1) a relatively complicated bias-correction term should be estimated based upon data structure and candidate models, thus leading to more variability to the selection; 2) The underlying measurement of GIC is defined as the Kullback-Leibler distance evaluated at the misspecified distribution, which could introduce systematic discrepancy to the truth, no matter how accurate the bias correction is. Therefore, GIC would intrinsically lose some power to capture the true model particularly when the specified distribution deviates farther from the truth. The same issue is also applied for QIC, regarding its misspecification of quasi-likelihood by considering the independent correlation structure \citep{pan2001akaike}.

Several extensions are open to research. \redreview{One is its extension to (ultra-)high dimensional cases. Per reviewers' suggestion, we looked into this framework by literature search and performing empirical simulation studies. Note that there are only a few papers related to the high dimensional empirical likelihood variable selection by \cite{tang2010} and \cite{chang2018}. In particular, \cite{tang2010} proposed the LASSO type of penalized empirical likelihood with diverging numbers of parameters in linear models. Later, \cite{chang2018} applied extra penalty in penalized empirical likelihood regulating the Lagrange multipliers so that ultra-high dimensional variable selection is theoretically applicable. To indicate the feasibility and potentials of this extension, we conducted simulation studies and found out out-performance of our proposal compared with other alternative criteria, which are provided in the Supplementary Material. However, the extended proofs for high-dimensional cases with empirical likelihood are not trivial and need more effort, which will be fully pursued in future studies.} \chixiang{Furthermore, the robust model prediction criterion would also be of independent research interest in the future, and informative degree of freedom instead of $p$ might be used to borrow more information, though it would increase the criteria complexity and case specificity and hence reduce its robustness in general practice.} As a starting point, ELCIC shows its potential and robustness to explore statistical issues related to model selection with highly complex and diverse data.

\vskip 14pt
\noindent {\large\bf Supplementary Material}

\redreview{The supplementary materials provide the detailed proofs for Theorems \ref{thm1}-\ref{thm4} and Corollary \ref{corollary}, and also the results for additional simulation studies including the cases for variable selection under the augmented inverse probability weighted estimator framework and the ultra-high dimensional set-up.}
\par
\vskip 14pt
\noindent {\large\bf Acknowledgments}

\redreview{The authors thank the editor, the associate editor and the referees for valuable suggestions.} Wang's research was partially supported by Grant UL1 TR002014 and KL2 TR002015 from the National Center for Advancing Translational Sciences (NCATS) and was also funded, in part, under a grant with the Pennsylvania Department of Health using Tobacco CURE Funds. The content is solely the responsibility of the authors and does not represent the official views of the National Institute of Health, and the Department specially disclaims responsibility for any analyses, interpretations or conclusions.
\par
\markboth{\hfill{\footnotesize\rm Chixiang Chen, Ming Wang, Rongling Wu and Runze Li} \hfill}
{\hfill {\footnotesize\rm  A Robust Consistent Information Criterion based on Empirical Likelihood} \hfill}

\bibhang=1.7pc
\bibsep=2pt
\fontsize{9}{14pt plus.8pt minus .6pt}\selectfont
\renewcommand\bibname{\large \bf References}
\expandafter\ifx\csname
natexlab\endcsname\relax\def\natexlab#1{#1}\fi
\expandafter\ifx\csname url\endcsname\relax
\def\url#1{\texttt{#1}}\fi
\expandafter\ifx\csname urlprefix\endcsname\relax\def\urlprefix{URL}\fi

\bibliographystyle{chicago}

\bibliography{Bibliography-MM-MC}


\end{document}



\renewcommand{\baselinestretch}{2}

\markright{ \hbox{\footnotesize\rm  Supplementary Material
}\hfill\\[-25pt]
\hbox{\footnotesize\rm
}\hfill }

\markboth{\hfill{\footnotesize\rm Chen et al.} \hfill}
{\hfill {\footnotesize\rm ELCIC for Model Selection} \hfill}

\renewcommand{\thefootnote}{}
$\ $\par \fontsize{12}{14pt plus.8pt minus .6pt}\selectfont


\centerline{\large\bf  Supplementary Material for ``A Robust Consistent Information Criterion for }
\vspace{2pt} 
\centerline{\large\bf  Model Selection based on Empirical Likelihood" by}
\vspace{.25cm}
 \centerline{Chixiang Chen$^1$, Ming Wang$^{1}$, Rongling Wu$^1$, Runze Li$^2$}
\vspace{.4cm}
\centerline{\it
1. Division of Biostatistics and Bioinformatics, Department of Public Health Science,}\centerline{\it Pennsylvania State College of Medicine, Hershey, PA, U.S.A., 17033}\centerline{\it
2. Department of Statistics and the Methodology Center, }\centerline{\it Pennsylvania State University, University Park, PA, U.S.A, 16802}
\vspace{.8cm}
\fontsize{9}{11.5pt plus.8pt minus .6pt}\selectfont
\noindent
\par

\setcounter{section}{0}
\setcounter{equation}{0}
\def\theequation{S\arabic{section}.\arabic{equation}}
\def\thesection{S\arabic{section}}

\fontsize{12}{14pt plus.8pt minus .6pt}\selectfont


\lhead[\footnotesize\thepage\fancyplain{}\leftmark]{}\rhead[]{\fancyplain{}\rightmark\footnotesize\thepage}
In the Supplementary Material, we will provide the technical proofs to the Theorems 1-4 and Corollary 1 in the main paper as well as the results for additional simulation studies. \red{Note that all expectations are evaluated at the true values. \chixiang{For convenience, we will utilize the following simplified notations: $E(\bfg\bfg^\T)=E(\bfg(\bfD,\bfgamma_0)\bfg^\T(\bfD,\bfgamma_0))$, $E(\partial\bfg/\partial\bfgamma^\T)=E(\partial\bfg(\bfD,\bfgamma_0)/\partial\bfgamma^\T)$;\\ $E(\bfg_1\bfg_1^\T)=E(\bfg_1(\bfD,\bfgamma_0)\bfg_1^\T(\bfD,\bfgamma_0))$, $E(\partial\bfg_1/\partial\bfgamma^\T)=E(\partial\bfg_1(\bfD,\bfgamma_0)/\partial\bfgamma^\T)$;\\
$E(\bfg_2\bfg_2^\T)=E(\bfg_2(\bfD,\bfgamma_0)\bfg_2^\T(\bfD,\bfgamma_0))$, $E(\partial\bfg_2/\partial\bfgamma^\T)=E(\partial\bfg_2(\bfD,\bfgamma_0)/\partial\bfgamma^\T)$;\\
$E(\bfg_1\bfg_2^\T)=E(\bfg_2\bfg_1^\T)^\T=E(\bfg_1(\bfD,\bfgamma_0)\bfg_2^\T(\bfD,\bfgamma_0))$.
}
} 

\section{Technical proofs}
\subsection{Proof of Theorem 1}

In order to show Theorem 1 in the main paper, we need the following lemma:
\begin{lemma}\label{lemma1}
   Under \textit{Condition} 1 and \textit{Condition} 2 in the main paper, we have the following relationship:
	\begin{equation*}
	\begin{split}
	\hat{\bfgamma}_{EL}=\hat{\bfgamma}_{EE}+o_p(\bfn^{-\frac{1}{2}}),\\
	\hat{\bflambda}_{EL}=\hat{\bflambda}_{EE}+o_p(\bfn^{-\frac{1}{2}}).
	\end{split}
	\end{equation*}
\end{lemma}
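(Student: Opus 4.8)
The plan is to represent both estimator pairs as zeros of score systems arising from the empirical-likelihood Lagrangian, and then to show that these two systems admit the \emph{same} first-order stochastic expansion about the truth $(\bfgamma_0,\0)$, so that their roots can differ only by a term of order $o_p(\bfn^{-\frac12})$. Recall that the empirical-likelihood estimators jointly solve the profiled saddle-point equations $\bfn^{-1}\sum_i \bfg(\bfD_i,\bfgamma)/\{1+\bflambda^\T\bfg(\bfD_i,\bfgamma)\}=\0$ and $\bfn^{-1}\sum_i \{1+\bflambda^\T\bfg(\bfD_i,\bfgamma)\}^{-1}(\partial\bfg(\bfD_i,\bfgamma)/\partial\bfgamma^\T)^\T\bflambda=\0$, whereas $(\hat\bfgamma_{EE},\hat\bflambda_{EE})$ solves the analogous system in which the ratio $1/\{1+\bflambda^\T\bfg\}$ is replaced by its linearization $1-\bflambda^\T\bfg$. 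The strategy is therefore a remainder-matching argument: quantify how much the two score maps differ pointwise near $(\bfgamma_0,\0)$, and propagate that gap through a common, invertible Jacobian.

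First I would record the two preliminary facts that make the expansion legitimate. Using the standard Owen-type argument together with \textit{Condition} 1 (which guarantees $E(\bfg\bfg^\T)$ is positive definite) and \textit{Condition} 2 (which guarantees $E(\partial\bfg/\partial\bfgamma^\T)$ has full column rank), both $\hat\bfgamma_{EL}$ and $\hat\bfgamma_{EE}$ are consistent for $\bfgamma_0$ and both multipliers satisfy $\hat\bflambda_{EL}=O_p(\bfn^{-\frac12})$ and $\hat\bflambda_{EE}=O_p(\bfn^{-\frac12})$; in addition $\max_i\|\bfg(\bfD_i,\bfgamma_0)\|=o_p(\bfn^{\frac12})$ under the moment condition, so that $\max_i|\bflambda^\T\bfg(\bfD_i,\bfgamma)|=o_p(1)$ uniformly on a shrinking neighborhood of the truth. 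These rates are exactly what license the geometric-series expansion $1/\{1+\bflambda^\T\bfg\}=1-\bflambda^\T\bfg+(\bflambda^\T\bfg)^2/\{1+\bflambda^\T\bfg\}$.

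Next I would substitute this expansion into the empirical-likelihood equations. The leading two terms reproduce exactly the estimating-equation system solved by $(\hat\bfgamma_{EE},\hat\bflambda_{EE})$, while the quadratic remainder consists of terms of the form $\bfn^{-1}\sum_i (\bflambda^\T\bfg_i)^2(\cdots)/\{1+\bflambda^\T\bfg_i\}$, which by the bounds above is $O_p(\bfn^{-1})$ and hence $o_p(\bfn^{-\frac12})$. Thus, writing $\bfF_{EL}$ and $\bfF_{EE}$ for the two stacked score maps, one obtains $\bfF_{EL}(\bfgamma,\bflambda)=\bfF_{EE}(\bfgamma,\bflambda)+o_p(\bfn^{-\frac12})$ uniformly over the neighborhood. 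A mean-value expansion of each map about $(\bfgamma_0,\0)$ shares the common Jacobian $\bfJ$, whose blocks are built from $E(\partial\bfg/\partial\bfgamma^\T)$ and $E(\bfg\bfg^\T)$ and which is nonsingular under \textit{Condition} 1 and \textit{Condition} 2. Inverting $\bfJ$ and subtracting the two expansions yields $\hat\bfgamma_{EL}-\hat\bfgamma_{EE}=o_p(\bfn^{-\frac12})$ and $\hat\bflambda_{EL}-\hat\bflambda_{EE}=o_p(\bfn^{-\frac12})$, which is the claim.

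The step I expect to be the main obstacle is the uniform control of the quadratic remainder: one must show that the $O_p(\bfn^{-1})$ bound on $\bfn^{-1}\sum_i(\bflambda^\T\bfg_i)^2(\cdots)$ holds not merely at a fixed point but uniformly over the $\bfn^{-\frac12}$-neighborhood of $(\bfgamma_0,\0)$ that contains both solutions, which is where the maximal bound $\max_i\|\bfg(\bfD_i,\bfgamma_0)\|=o_p(\bfn^{\frac12})$ and the smoothness of $\bfg$ imposed in \textit{Condition} 2 do the real work. Once that uniform negligibility and the invertibility of $\bfJ$ are in hand, the comparison of the two Bahadur-type expansions is routine.
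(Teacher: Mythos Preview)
Your argument rests on a misreading of what $\hat{\bfgamma}_{EE}$ and \textit{Condition} 2 are. In the paper, $\hat{\bfgamma}_{EE}$ is \emph{not} a linearized empirical-likelihood estimator obtained by replacing $1/\{1+\bflambda^\T\bfg\}$ with $1-\bflambda^\T\bfg$; it is the just-identified Z-estimator that solves $\sum_i \bfg_1(\bfD_i,\bfgamma)=\0$ using only the first $p$ components of $\bfg=(\bfg_1^\T,\bfg_2^\T)^\T$, and $\hat{\bflambda}_{EE}=\hat{\bflambda}(\hat{\bfgamma}_{EE})$ is the ordinary profile multiplier evaluated there. Likewise, \textit{Condition} 2 is not a rank condition on $E(\partial\bfg/\partial\bfgamma^\T)$; it is the redundancy identity $E(\partial\bfg_2/\partial\bfgamma^\T)^\T=E(\partial\bfg_1/\partial\bfgamma^\T)^\T(E\bfg_1\bfg_1^\T)^{-1}E(\bfg_1\bfg_2^\T)$, which says that the extra moments $\bfg_2$ carry no first-order information beyond $\bfg_1$. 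Your remainder-matching argument therefore compares $\hat{\bfgamma}_{EL}$ to the wrong object and never invokes \textit{Condition} 2 in its true form---yet without that condition the lemma is simply false, since the over-identified EL estimator is in general strictly more efficient than the just-identified $\hat{\bfgamma}_{EE}$, so their influence functions differ and $\hat{\bfgamma}_{EL}-\hat{\bfgamma}_{EE}$ is exactly of order $O_p(\bfn^{-1/2})$, not $o_p(\bfn^{-1/2})$.

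The paper's proof proceeds quite differently: it writes the Qin--Lawless/GMM expansion $\hat{\bfgamma}_{EL}-\bfgamma_0=-\bfGamma^{-1}E(\partial\bfg/\partial\bfgamma^\T)^\T(E\bfg\bfg^\T)^{-1}\,n^{-1}\sum_i\bfg(\bfD_i,\bfgamma_0)+o_p(\bfn^{-1/2})$ with $\bfGamma=E(\partial\bfg/\partial\bfgamma^\T)^\T(E\bfg\bfg^\T)^{-1}E(\partial\bfg/\partial\bfgamma^\T)$, and then uses \textit{Condition} 2 together with a block inversion of $E(\bfg\bfg^\T)$ to collapse both $\bfGamma$ and the weighted score to their $\bfg_1$-only analogues, yielding exactly the influence function $-\{E(\partial\bfg_1/\partial\bfgamma^\T)\}^{-1}n^{-1}\sum_i\bfg_1(\bfD_i,\bfgamma_0)$ of $\hat{\bfgamma}_{EE}$. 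The statement for $\hat{\bflambda}$ then follows by a Taylor expansion of the $\bflambda$-equation at $(\bfgamma_0,\0)$ and substituting $\hat{\bfgamma}_{EL}=\hat{\bfgamma}_{EE}+o_p(\bfn^{-1/2})$. The essential step you are missing is this algebraic reduction showing that, under \textit{Condition} 2, the optimal-GMM influence function coincides with the just-identified one.
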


\begin{proof}
	Let $\tilde{\bfgamma}_{0}=(\bfgamma^\T_{0},\mathbf{0}^\T)^\T$. Along the lines with the proof of Lemma 1 and Theorem 1 in \cite{qin1994} under \textit{Condition 1}, \red{and based on asymptotic theory in generalized method of moment \citep{newey1994large}, we have}
	\begin{equation}\label{lm11}
	\begin{split}
	&\hat{\bfgamma}_{EL}-\bfgamma_0=-\bfGamma^{-1} E\Big(\frac{\partial\bfg}{\partial\bfgamma^\T}\Big)^\T \Big(E\bfg\bfg^\T\Big)^{-1} \frac{1}{n} \sum_{i=1}^{n}\bfg(\bfD_i,\bfgamma_{0})+o_p(\bfn^{-\frac{1}{2}}),\\
	&\hat{\bfgamma}_{EE}-\bfgamma_0=-E\Big(\frac{\partial\bfg_1}{\partial\bfgamma^\T}\Big)^{-1} \frac{1}{n} \red{\sum_{i=1}^{n}\bfg_1(\bfD_i,\tilde{\bfgamma}_{0})+o_p(\bfn^{-\frac{1}{2}})},
	\end{split}
	\end{equation}
	with $\bfGamma= E(\partial\bfg/\partial\bfgamma^\T)^\T(E\bfg\bfg^\T)^{-1}E(\partial\bfg/\partial\bfgamma^\T)$. \red{Note that $\bfg_1(\bfD_i,\tilde{\bfgamma}_{0})=\bfg_1(\bfD_i,{\bfgamma}_{0})$. So we can keep using $\bfg_1(\bfD_i,{\bfgamma}_{0})$ in the following proof.}
	Thus, we have:
	\begin{equation*}
	\begin{split}
	\bfGamma
	=&\left(E\Big(\frac{\partial\bfg_1}{\partial\bfgamma^\T}\Big)^\T, -E\Big(\frac{\partial\bfg_1}{\partial\bfgamma^\T}\Big)^\T(E\bfg_1\bfg_1^\T)^{-1}E(\bfg_1\bfg_2^\T)+E\Big(\frac{\partial\bfg_2}{\partial\bfgamma^\T}\Big)^\T\right)\times\\
	& \left(\begin{matrix}
	E(\bfg_1\bfg_1^\T)^{-1} & 0\\ 0 & \bfA
	\end{matrix}\right)\left(\begin{matrix}
	E (\frac{\partial\bfg_1}{\partial\bfgamma^\T} )\\ 
	-E(\bfg_2\bfg_1^\T)E(\bfg_1\bfg_1^\T)^{-1} E(\frac{\partial\bfg_1}{\partial\bfgamma^\T} ) +E (\frac{\partial\bfg_2}{\partial\bfgamma^\T} ),
	\end{matrix}\right),\\
	\end{split}
	\end{equation*}
	with \red{$\bfA=E(\bfg_2\bfg_2^\T)-E(\bfg_2\bfg_1^\T)(E\bfg_1\bfg_1^\T)^{-1}E(\bfg_1\bfg_2^\T)$}. By \textit{Condition} 2, we have $E(\partial\bfg_2/\partial\bfgamma^\T)^\T=E(\partial\bfg_1/\partial\bfgamma^\T)^\T (E\bfg_1 \bfg_1^\T)^{-1} E( \bfg_1 \bfg_2^\T)$, \red{which leads to}
	\begin{equation}\label{lm12}
	\bfGamma=E\Big(\frac{\partial\bfg}{\partial\bfgamma^\T}\Big)^\T (E \bfg \bfg^\T)^{-1} E\Big(\frac{\partial\bfg}{\partial\bfgamma^\T}\Big)= E\Big(\frac{\partial\bfg_1}{\partial\bfgamma^\T}\Big)^\T (E \bfg_1 \bfg_1^\T)^{-1} E\Big(\frac{\partial\bfg_1}{\partial\bfgamma^\T}\Big).
	\end{equation}
Similarly, applying \textit{Condition} 2 again, we have
	\begin{equation}\label{lm13}
	E\Big(\frac{\partial\bfg}{\partial\bfgamma^\T}\Big)^\T (E\bfg \bfg^\T)^{-1}  \frac{1}{n} \sum_{i=1}^{n} \bfg(\bfD_i,\bfgamma_{0}) =  E\Big(\frac{\partial\bfg_1}{\partial\bfgamma^\T}\Big)^\T (E \bfg_1 \bfg_1^\T)^{-1} \frac{1}{n} \sum_{i=1}^{n} \red{\bfg_1(\bfD_i,{\bfgamma}_{0})}.
	\end{equation}
Substituting (\ref{lm12}) and (\ref{lm13}) into (\ref{lm11}), and by noticing that $E(\partial\bfg_1/\partial\bfgamma^\T)$ is invertible, we have
	\begin{equation*}
	\begin{split}
	\hat{\bfgamma}_{EL}-\bfgamma_0=&-E\Big(\frac{\partial\bfg_1}{\partial\bfgamma^\T}\Big)^{-1}\frac{1}{n} \sum_{i=1}^{n} \red{\bfg_1(\bfD_i,{\bfgamma}_{0})} +o_p(\bfn^{-\frac{1}{2}})\\
	=&\hat{\bfgamma}_{EE}-\bfgamma_0+o_p(\bfn^{-\frac{1}{2}}).
	\end{split}
	\end{equation*}
	
	Finally, by Taylor expansion of the equation (2.4) in the main paper at $\bfgamma_0$ and $\bflambda_0=\mathbf{0}$, \red{we can derive the following equation} \begin{equation*}
	\frac{1}{n} \sum_{i=1}^{n}\bfg(\bfD_i,\bfgamma_{0})+E\Big(\frac{\partial\bfg}{\partial\bfgamma^\T}\Big)(\hat{\bfgamma}_{EE}-\bfgamma_0)-E(\bfg\bfg^\T)\hat{\bflambda}_{EE}+o_p(\bfn^{-\frac{1}{2}})=\mathbf{0},
	\end{equation*}
	leading to $\hat{\bflambda}_{EL}=\hat{\bflambda}_{EE}+o_p(\bfn^{-1/2})$ by applying $\hat{\bfgamma}_{EL}=\hat{\bfgamma}_{EE}+o_p(\bfn^{-1/2})$.
\end{proof}	

Now we are ready to show Theorem 1, and the proof is shown next. 
\begin{proof}	
	\red{Now consider any random variables $\bflambda$ and $\bfgamma$ satisfying $\bflambda-\bflambda_0=O_P(\bfn^{-1/2})$} and $\bfgamma-\bfgamma_0=O_P(\bfn^{-1/2})$ and expand $\log R^F(\bflambda,\bfgamma)$ at $\bflambda^*$ and $\bfgamma^*$ such that $\log R^F(\bflambda^*,\bfgamma^*)$ is maximized. Indeed,	these two $\bflambda^*$ and $\bfgamma^*$ are the maximum empirical likelihood estimates $\hat{\bflambda}_{EL}$ and $\hat{\bfgamma}_{EL}$, respectively; given $l=-\log R^F(\bflambda,\bfgamma)$, we have\red{
	\begin{equation*}
	\begin{split}
	\log R^F(\bflambda,\bfgamma)&=\log R^F(\hat{\bflambda}_{EL},\hat{\bfgamma}_{EL})-\frac{\partial l}{\partial \bflambda}\Big|_{\bflambda= \hat{\bflambda}_{EL},\bfgamma=\hat{\bfgamma}_{EL}} (\bflambda-\hat{\bflambda}_{EL})\\
	&-\frac{\partial l}{\partial \bfgamma}\Big|_{\bflambda= \hat{\bflambda}_{EL},\bfgamma=\hat{\bfgamma}_{EL}} (\bfgamma-\hat{\bfgamma}_{EL}) \\
	&- \frac{1}{2}(\bflambda-\hat{\bflambda}_{EL})^\T \frac{\partial^2 l}{\partial \bflambda\partial \bflambda^\T}\Big|_{\bflambda= \hat{\bflambda}_{EL},\bfgamma=\hat{\bfgamma}_{EL}} (\bflambda-\hat{\bflambda}_{EL}) \\
	&-(\bflambda-\hat{\bflambda}_{EL})^\T \frac{\partial^2 l}{\partial \bflambda\partial \bfgamma^\T}\Big|_{\bflambda= \hat{\bflambda}_{EL},\bfgamma=\hat{\bfgamma}_{EL}} (\bfgamma-\hat{\bfgamma}_{EL})\\
	&-\frac{1}{2} (\bfgamma-\hat{\bfgamma}_{EL})^\T \frac{\partial^2 l}{\partial \bfgamma\partial \bfgamma^\T}\Big|_{\bflambda= \hat{\bflambda}_{EL},\bfgamma=\hat{\bfgamma}_{EL}} (\bfgamma-\hat{\bfgamma}_{EL})+o_p(1).
	\end{split}
	\end{equation*}}
	To be noted that $(\partial l/\partial \bflambda)\mid_{\bflambda= \hat{\bflambda}_{EL},\bfgamma=\hat{\bfgamma}_{EL}}=(\partial l/\partial \bfgamma)\mid_{\bflambda= \hat{\bflambda}_{EL},\bfgamma=\hat{\bfgamma}_{EL}} =\mathbf{0}$ by definition of $\hat{\bflambda}_{EL}$ and $\hat{\bfgamma}_{EL}$. Furthermore, applying the weak law of large number, we have $(1/n)\partial^2 l/(\partial \bfgamma\partial \bfgamma^\T)\xrightarrow{P} \mathbf{0}$, $(1/n) \partial^2 l/(\partial \bflambda\partial \bflambda^\T)\xrightarrow{P} \red{-E \{\bfg(\bfD, \bfgamma_0)\bfg^\T(\bfD, \bfgamma_0) \}}$, and $(1/n) \partial^2 l/(\partial \bflambda\partial \bfgamma^\T)\xrightarrow{P} \red{E\{\partial\bfg(\bfD, \bfgamma_0)/\partial\bfgamma^\T\}}$. \red{By utilizing all the derived results, we rewrite the logarithm of empirical likelihood ratio as} \red{
	\begin{equation*}
	\begin{split}
	\log R^F(\bflambda,\bfgamma)= &-\frac{1}{2}\Big((\bflambda-\hat{\bflambda}_{EL})^\T,(\bfgamma-\hat{\bfgamma}_{EL})^\T\Big)^\T \left(\begin{matrix}
	-n\bfSigma_{11}&n\bfSigma_{12}\\ n\bfSigma_{21}&\mathbf{0}
	\end{matrix}\right)\left(\begin{matrix}
	\bflambda-\hat{\bflambda}_{EL}\\ \bfgamma-\hat{\bfgamma}_{EL} 
	\end{matrix}\right)\\
	&+\log R^F(\hat{\bflambda}_{EL},\hat{\bfgamma}_{EL})+o_p(1),
	\end{split}
	\end{equation*}}
	where \red{$\bfSigma_{11}=E\{\bfg(\bfD, \bfgamma_0)\bfg^\T(\bfD, \bfgamma_0)\}$, $\bfSigma_{12} =E\{\partial\bfg(\bfD, \bfgamma_0)/\partial\bfgamma^\T\}$}, $\bfSigma_{21} =\bfSigma_{12}^\T$. \red{By Lemma \ref{lemma1} and some algebra, we can have $\log R^F(\bflambda,\bfgamma)=\log R^F(\hat{\bflambda}_{EE},\hat{\bfgamma}_{EE})-(1/2)\bfdelta^\T(n\bfSigma)\bfdelta+o_p(1)$ with
	\begin{equation*}
	\bfSigma=\left(\begin{matrix}
	-\bfSigma_{11}&\mathbf{0}\\ \mathbf{0}&\bfSigma_{21}\bfSigma^{-1}_{11}\bfSigma_{12}
	\end{matrix}\right) \text{and}~ \bfdelta= \left(\begin{matrix} \bfdelta_1 \\ \bfdelta_2
	\end{matrix}\right),
	\end{equation*}
where $\bfdelta_1=\bflambda-\hat{\bflambda}_{EE}+\bfSigma_{21}\bfSigma_{11}^{-1}(\bfgamma-\hat{\bfgamma}_{EE})$ and $\bfdelta_2=\bfgamma-\hat{\bfgamma}_{EE}$.}	
	
	\red{In the end, by integrating out the random variables $\bfdelta$, the marginal probability will become
	\begin{equation*}
	\begin{split}
	P(\bfY| M) =&  R^F(\hat{\bflambda}_{EE},\hat{\bfgamma}_{EE}) \int\exp \Big\{-\frac{1}{2}\bfdelta^\T(n\bfSigma)\bfdelta\Big\}\rho_{\bfdelta}(\bfdelta)d\bfdelta+o_p(1)\\
	=&  R^F(\hat{\bflambda}_{EE},\hat{\bfgamma}_{EE}) \int\exp \Big\{\frac{1}{2}\bfdelta_1^\T(n\bfSigma_{11})\bfdelta_1\Big\}\rho_{\bfdelta_1}(\bfdelta_1)d\bfdelta_1\\
	&\cdot\int\exp \Big\{-\frac{1}{2}\bfdelta_2^\T(n\bfSigma_{21}\bfSigma^{-1}_{11}\bfSigma_{12})\bfdelta_2\Big\}\rho_{\bfdelta_2}(\bfdelta_2)d\bfdelta_2+o_p(1)\\
	\end{split}
	\end{equation*}
	Thus, by applying non-informative prior to $\bfdelta_2$, i.e., $\rho_{\bfdelta_2}(\bfdelta_2)=1$, and applying the Laplace approximation, we can have $P(\bfY| M) =R^F(\hat{\bflambda}_{EE},\hat{\bfgamma}_{EE}) (2\pi)^{p/2}\big|n\bfSigma_{21}\bfSigma^{-1}_{11}\bfSigma_{12}\big|^{-1/2}\tilde{A}+o_p(1)$, with $\tilde{A}=\int\exp \Big\{\frac{1}{2}\bfdelta_1^\T(n\bfSigma_{11})\bfdelta_1\Big\}\rho_{\bfdelta_1}(\bfdelta_1)d\bfdelta_1$, which finally leads to the conclusion by taking negative two logarithm of marginal probability: $-2\log P(\bfY| M) = -2\log R^F(\hat{\bflambda}_{EE},\hat{\bfgamma}_{EE}) +p\log n+\log|\bfSigma_{21}\bfSigma^{-1}_{11}\bfSigma_{12}|-p\log(2\pi)-2\log(\tilde{A})+o_p(1)$.}
\end{proof}	

\subsection{Proof of Theorem 2}
\begin{proof}
	
	Given \textit{Condition} 1 and along the lines of the proofs in \cite{owen2001}, for any $\bfgamma$ satisfying $\lVert \bfgamma-\bfgamma_{0}\rVert\leq Cn^{-1/3}$ with a large enough constant $C>0$, we have $\hat{\bflambda}(\boldsymbol{\gamma})=o_p(\bfn^{-1/3})$ and $\max_{1\leq i\leq n}\lVert \bfg(\bfD_i, \bfgamma)\rVert=o_p(n^{-1/3})$. Accordingly, $\max_{1\leq i\leq n} \hat{\bflambda}^\T(\boldsymbol{\bfgamma})\bfg(\bfD_i, \bfgamma)=o_p(1)$ uniformly for $\lVert\bfgamma-\bfgamma_{0}\rVert\leq Cn^{-1/3}$. On the other hand, by the definition of $\hat{\bflambda}_{EE}$ and taking first order Taylor expansion at $\bfgamma_0$ and $\bflambda_0=\mathbf{0}$, we have the following equation:
	\begin{equation*}
	\begin{split}
	\mathbf{0}=&\frac{1}{n}\frac{\partial l}{\partial \bflambda}\Big|_{\bflambda= \hat{\bflambda}_{EE},\bfgamma=\hat{\bfgamma}_{EE}} 
	=\frac{1}{n} \sum_{i=1}^n \frac{\bfg(\bfD_i, \hat{\bfgamma}_{EE})}{1+\hat{\bflambda}_{EE}\bfg(\bfD_i, \hat{\bfgamma}_{EE})}\\
	=&\frac{1}{n} \sum_{i=1}^n \bfg(\bfD_i,  {\bfgamma}_{0})+\frac{1}{n} \sum_{i=1}^n \red{\frac{\bfg(\bfD_i,  \bfgamma_0 )}{\partial\bfgamma^\T}} (\hat{\bfgamma}_{EE}-\bfgamma_{0})\\
	&-\frac{1}{n} \sum_{i=1}^n \bfg(\bfD_i,  {\bfgamma}_{0})\bfg^\T(\bfD_i,  {\bfgamma}_{0}) (\hat{\bflambda}_{EE}-\mathbf{0})+o_p(\bfvarepsilon_n),
	\end{split}
	\end{equation*}
	where \red{$\bfvarepsilon_n=\lVert\hat{\bfgamma}_{EE}-\bfgamma_0\lVert+\lVert\hat{\bflambda}_{EE}\lVert$}. By solving $\hat{\bflambda}_{EE}$ from the above formula, we have
	\begin{equation}\label{thm21}
	\hat{\bflambda}_{EE}=\bfS_n^{-1}\Big\{\frac{1}{n} \sum_{i=1}^n \bfg(\bfD_i,  {\bfgamma}_{0}) +\frac{1}{n} \sum_{i=1}^n \red{\frac{\bfg(\bfD_i,  \bfgamma_0 )}{\partial\bfgamma^\T}}(\hat{\bfgamma}_{EE}-\bfgamma_{0}) +o_p(\bfvarepsilon_n) \Big\},
	\end{equation}
	where $\bfS_n=(1/n) \sum_{i=1}^n \bfg(\bfD_i,  {\bfgamma}_{0}) \bfg^\T(\bfD_i,  {\bfgamma}_{0})$. Also, $\hat{\bfgamma}_{EE}-\bfgamma_{0}=O_p(\bfn^{-1/2})$ and $(1/n) \sum_{i=1}^n \bfg(\bfD_i, {\bfgamma}_{0})=O_p(\bfn^{-1/2})$, we conclude that $\bfvarepsilon_n=O_p(\bfn^{-1/2})$. Thus, by \textit{Condition} 1 and the weak law of large number, $\hat{\bflambda}_{EE}$ is rewritten as
	\begin{equation}\label{thm22}
	\hat{\bflambda}_{EE}=\bfSigma_{11}^{-1}\Big\{\frac{1}{n} \sum_{i=1}^n \bfg(\bfD_i,  {\bfgamma}_{0}) + \bfSigma_{12}\big(\hat{\bfgamma}_{EE} -\bfgamma_{0}\big)\Big\}+ o_p(\bfn^{-\frac{1}{2}}).
	\end{equation}
	
	Now let us expand $l$ in the following manner:
	\begin{equation*}
	\begin{split}
	l=&\sum_{i=1}^n\log\Big\{1+\hat{\bflambda}_{EE} \bfg(\bfD_i,  \hat{\bfgamma}_{EE})\Big\}\\
	=&\hat{\bflambda}_{EE}^\T\sum_{i=1}^n\bfg(\bfD_i,  \hat{\bfgamma}_{EE})-\frac{1}{2} \hat{\bflambda}_{EE}^\T\Big\{ \sum_{i=1}^n\bfg(\bfD_i,  \hat{\bfgamma}_{EE}) \bfg^\T(\bfD_i,  \hat{\bfgamma}_{EE})\Big\} \hat{\bflambda}_{EE}^\T\\
	&+O_p\Big(\sum_{i=1}^n\big\{\hat{\bflambda}_{EE}^\T \bfg(\bfD_i,  \hat{\bfgamma}_{EE})\big\}^3\Big).
	\end{split}
	\end{equation*}

It is noted that
	$$\sum_{i=1}^n\big\{\hat{\bflambda}_{EE}^\T \bfg(\bfD_i,  \hat{\bfgamma}_{EE})\big\}^3 \leq \sum_{i=1}^n\big\{\hat{\bflambda}_{EE}^\T \bfg(\bfD_i,  \hat{\bfgamma}_{EE})\big\}^2 \max_{1\leq i\leq n} \hat{\bflambda}_{EE}^\T \bfg(\bfD_i,  \hat{\bfgamma}_{EE}).$$
	\red{Thus, by realizing that} $\max_{1\leq i\leq n} \hat{\bflambda}_{EE}^\T \bfg(\bfD_i,  \hat{\bfgamma}_{EE})=o_p(1)$ and $\sum_{i=1}^n\big\{\hat{\bflambda}_{EE}^\T \bfg(\bfD_i,  \hat{\bfgamma}_{EE})\big\}^2=O_p(1)$ by \textit{Condition} 1, uniformly hold for $\lVert\hat{\bfgamma}_{EE}- \bfgamma_{0}\rVert\leq Cn^{-1/3}$, we have
	\begin{equation*}
	l=\hat{\bflambda}_{EE}^\T\sum_{i=1}^n\bfg(\bfD_i,  \hat{\bfgamma}_{EE})-\frac{1}{2} \hat{\bflambda}_{EE}^\T\Big\{ \sum_{i=1}^n\bfg(\bfD_i,  \hat{\bfgamma}_{EE}) \bfg^\T(\bfD_i,  \hat{\bfgamma}_{EE})\Big\} \hat{\bflambda}_{EE}+\red{o_p(1)}.
	\end{equation*}
	Furthermore, by applying the Taylor expansion for $\red{n^{-1/2}}\bfg(\bfD_i, \hat{\bfgamma}_{EE})$, we get
	\begin{equation*}
	\begin{split}
	&\hat{\bflambda}_{EE}^\T\sum_{i=1}^n\bfg(\bfD_i,  \hat{\bfgamma}_{EE})\\
	=&\red{n^{\frac{1}{2}}}\hat{\bflambda}_{EE}^\T\Big\{\red{n^{-\frac{1}{2}}}\sum_{i=1}^n\bfg(\bfD_i,  \bfgamma_{0})+ \red{n^{-\frac{1}{2}}} \sum_{i=1}^n \red{\frac{\partial\bfg(\bfD_i,  \bfgamma_0)}{\partial \bfgamma^\T}}(\hat{\bfgamma}_{EE}-\bfgamma_{0}) + o_p(\bf1) \Big\}\\
	=& \big(\red{n^{\frac{1}{2}}} \hat{\bflambda}_{EE}^\T\big) \bfS_n \big(\red{n^{\frac{1}{2}}} \hat{\bflambda}_{EE}\big)+o_p(1).
	\end{split}
	\end{equation*}
	The last equation is derived from (\ref{thm21}). Therefore, by \textit{Condition} 1 again, we have $l=(1/2)(\red{n^{\frac{1}{2}}}\hat{\bflambda}_{EE}^\T)\bfSigma_{11} (\red{n^{\frac{1}{2}}}\hat{\bflambda}_{EE})+o_p(1)$. Substituting $\hat{\bflambda}_{EE}$ by (\ref{thm22}), $l=(1/2)(\red{n^{\frac{1}{2}}}\bfQ_n^\T)\bfSigma^{-1}_{11} (\red{n^{\frac{1}{2}}}\bfQ_n)+o_p(1)$ with $\bfQ_n= (1/n) \sum_{i=1}^n \bfg(\bfD_i,  {\bfgamma}_{0}) + \bfSigma_{12}(\hat{\bfgamma}_{EE} -\bfgamma_{0})$, which completes the proof of Theorem 2.
	\end{proof}
	
	\subsection{Proof of Corollary 1}
	\begin{proof}
	
	Let us rewrite $\bfQ_n$ in the following manner:
	\begin{equation*}
	\begin{split}
	    \bfQ_n=&\frac{1}{n} \sum_{i=1}^n \bfg(\bfD_i,  \bfgamma_0)-\bfSigma_{12} \Big(E\frac{\partial\bfg_1}{\partial \bfgamma^\T}\Big)^{-1}\frac{1}{n} \sum_{i=1}^n \bfg_1(\bfD_i,  \red{\bfgamma_0})+\red{o_p(\bfn^{-\frac{1}{2}})}\\
	=&\bfSigma_{\ast}\frac{1}{n} \sum_{i=1}^n \bfg(\bfD_i,  \bfgamma_0)+\red{o_p(\bfn^{-\frac{1}{2}})},
	\end{split}
	\end{equation*}
	with $\bfSigma_{\ast}=\bfI_{L\times L}-\Big(\bfSigma_{12}\{E(\partial\bfg_1/\partial \bfgamma^\T)\}^{-1},\0_{L\times(L-p)}\Big)$. Here $\bfI_{L\times L}$ is an identity matrix. Accordingly, we rewrite $l$ as
	\begin{equation*}
	\begin{split}
	   l=&\frac{1}{2} \bigg\{ \red{n^{-\frac{1}{2}}}\sum_{i=1}^n \bfSigma_{11}^{-\frac{1}{2}} \bfg(\bfD_i,  \bfgamma_0) \bigg\}^\T \bfSigma_{11}^{\frac{1}{2}} \bfSigma_{\ast}^\T \bfSigma^{-1}_{11} \bfSigma_{\ast} \bfSigma_{11}^{\frac{1}{2}} \bigg\{ \red{n^{-\frac{1}{2}}}\sum_{i=1}^n \bfSigma_{11}^{-\frac{1}{2}} \bfg(\bfD_i,  \bfgamma_0) \bigg\}+o_p(1).
	\end{split}
	\end{equation*}
	
	Furthermore, according to the square-root decomposition, we have \red{$\bfOmega=\bfSigma_{11}^{1/2} \bfSigma_{\ast}^\T \bfSigma^{-1}_{11}\bfSigma_{\ast} \bfSigma_{11}^{1/2}=\bfP\bfLambda \bfP^\T$} with an orthogonal matrix $\bfP_{L\times\tilde{L}}$ and a diagonal matrix $\bfLambda_{L\times\tilde{L}}$ containing positive eigenvalues $\Lambda_1$,\ldots,$\Lambda_{\tilde{L}}$ of $\bfOmega$. Here $\tilde{L}$ represents the rank of the matrix $\bfOmega$ and $L$ is the length of $\bfg(\bfD, \bfgamma)$. Therefore, $l$ can be expressed as 
	\begin{equation*}
	l=\frac{1}{2} \sum_{j=1}^{\tilde{L}} \Lambda_j\Big[\red{n^{-\frac{1}{2}}}\sum_{i=1}^n\bfSigma_{11}^{-\frac{1}{2}} \bfg(\bfD_i,  \bfgamma_0)\Big]_j^2\red{+o_p(1)},
	\end{equation*}
	where $[\bfJ]_j$ indicates the $j^{th}$ element in vector $\bfJ$. Together with the fact of $\big[\red{n^{-1/2}}\sum_{i=1}^n \bfSigma_{11}^{-1/2}\bfg(\bfD_i, \bfgamma_0)\big]_j$ asymptotically followed by an independent standard normal distribution for $j=1,\ldots,\tilde{L}$, we conclude that $2l$ converges in distribution to $\sum_{j=1}^{\tilde{L}} \Lambda_j \chi_1^2 $, which is a weighted sum of standard $\chi^2$ distributions.

\end{proof}

\subsection{Proof of Theorem 3}
\begin{proof}
	For any $\bfgamma$ in the neighbourhood of $\bfgamma_*\neq\bfgamma_0$, we define $\tilde{\bflambda}(\bfgamma)=n^{-c}(\log n) \bar{\bfg}_n$, with $\bar{\bfg}_n=(1/n) \sum_{i=1}^n \bfg(\bfD_i,  \bfgamma)$ and $1/2<c<1$.
	
	First by Markov inequality and \textit{Condition} 3 \red{with some $\delta>0$}, we have \red{
	\begin{equation*}
	\sum_{i=1}^{\infty} P\Big(\lVert \bfg(\bfD_i,  \bfgamma) \rVert^{2} > i \Big) \leq \sum_{i=1}^{\infty}  \frac{E\lVert \bfg(\bfD_i,  \bfgamma) \rVert^{2+\delta}}{i^{1+\delta/2}}< \infty.
	\end{equation*}}
Applying the Borel-Cantelli Lemma, \red{we conclude that we can always find a large enough $N$ such that for any $i>N$, we have$\lVert \bfg(\bfD_i,  \bfgamma) \rVert \leq i^{-1/2}$ holds with probability one, which further implies $\max_{1\leq i\leq n} \lVert \bfg(\bfD_i,  \bfgamma) \rVert=o_p(n^{1/2})$.} Thus,
	\begin{equation*}
	\begin{split}
	&\max_{1\leq i\leq n} \lVert\tilde{\bflambda}^\T(\bfgamma) \bfg(\bfD_i,  \bfgamma)\rVert\\
	\leq& \lVert \tilde{\bflambda}(\bfgamma)\rVert\max_{1\leq i\leq n}\lVert\bfg(\bfD_i,  \bfgamma)\rVert
	= n^{\frac{1}{2}-c} \log(n) \lVert\bar{\bfg}_n\rVert =o_p(1),  
	\end{split}
	\end{equation*}
	where the the first inequality holds by Cauchy-Schwartz inequality and the last equality holds by \textit{Condition} 4. Therefore, with probability approaching to $1$, we have for all $1\leq i\leq n$, $1+\tilde{\bflambda}^\T(\bfgamma) \bfg(\bfD_i, \bfgamma)>0$. Finally,
	\begin{equation*}
	\begin{split}
	l&=\sup_{\bflambda} \sum_{i=1}^n \log\Big\{1+\bflambda^\T  \bfg(\bfD_i,  \bfgamma) \Big\}
	\geq \sum_{i=1}^n \log\Big\{1+\tilde{\bflambda}^\T(\bfgamma)  \bfg(\bfD_i,  \bfgamma) \Big\}\\
	&= \sum_{i=1}^n \tilde{\bflambda}^\T(\bfgamma)  \bfg(\bfD_i,  \bfgamma)+o_p(1)
	=n^{1-c} \lVert\bar{\bfg}_n\rVert^2\log(n)+o_p(1), 
	\end{split}
	\end{equation*}
	where the first equality holds by the property of the dual problem, the second equality holds by the first-order Taylor expansion of the function $\log (1+x)$ at $0$, and the final result holds under \textit{Condition} 4.
\end{proof}

\subsection{ Proof of Theorem 4}
\begin{proof}
	Given the fixed number of parameters in the full model, and $p$ and $p_0$ as the cardinalities of \red{candidate model $M$ and the true model $M_0$}, respectively, we show Theorem 4 in the following manner, which can be easily proved by the results from Theorems 2 and 3.
	$\textbf{(I)}~$$\textsc{ELCIC}(M)-\textsc{ELCIC}(M_0)>0$ with probability tending to $1$ for $M_0\nsubseteq M$.\\
	$\textbf{(II)}~$$\textsc{ELCIC}(M)-\textsc{ELCIC}(M_0)>0$ with probability tending to $1$ for $M_0\subseteq M$ and $p_0<p$. \\
    First, for $M_0\nsubseteq M$, applying Theorem 2 to $\textsc{ELCIC}(M_0)$ and Theorem 3 to $\textsc{ELCIC}(M)$, we derive
	\begin{equation*}
	\begin{split}
	\textsc{ELCIC}(M)-\textsc{ELCIC}(M_0)&=2l(M)+p\log(n)-\big\{2l(M_0)+p_0\log(n)\big\}\\
	&=2n^{1-c }\lVert\bar{\bfg}_n\rVert^2\log(n) -(\red{n^{\frac{1}{2}}}\bfQ_{n0}^\T)\bfSigma_{110}^{-1}(\red{n^{\frac{1}{2}}}\bfQ_{n0})\\ &+(p-p_0)\log(n)+o_p(1), 
	\end{split}
	\end{equation*}
	where $\bfQ_{n0}$ and $\bfSigma_{110}$ denote $\bfQ_{n}$ and $\bfSigma_{11}$ under the true model $M_0$. Notice that 
	\begin{equation*}
	P\Big[(\red{n^{\frac{1}{2}}}\bfQ_{n0}^\T)\bfSigma_{110}^{-1}(\red{n^{\frac{1}{2}}}\bfQ_{n0})\geq\log n\Big] \leq \frac{E\big(n\bfQ_{n0}^\T\bfSigma_{110}^{-1}\bfQ_{n0}\big)}{\log(n)}=\frac{tr(\bfSigma_{110}^{-1}\bfV)}{\log(n)}.
	\end{equation*}
	Applying \textit{Condition} 5, we have $(\red{n^{1/2}}\bfQ_{n0}^\T)\bfSigma_{110}^{-1} (\red{n^{1/2}}\bfQ_{n0}) =o_p(\log n)$, which further indicates that $n^{1-c}\lVert\bar{\bfg}_n\rVert^2\log(n) $ is the dominant term going to infinity under \textit{Condition} 4. Therefore, we have $\textsc{ELCIC}(M)-\textsc{ELCIC}(M_0)>0$ with probability tending to $1$  for $M_0\nsubseteq M$.
	
	Second, for $M_0\subseteq M$ and $p_0<p$, applying Theorem 2 to $\textsc{ELCIC}(M)$ and $\textsc{ELCIC}(M_0)$, we can derive
	\begin{equation*}
	\begin{split}
	\textsc{ELCIC}(M)-\textsc{ELCIC}(M_0)= &(\red{n^{\frac{1}{2}}}\bfQ_{n}^\T)\bfSigma_{11}^{-1} (\red{n^{\frac{1}{2}}}\bfQ_{n} )- (\red{n^{\frac{1}{2}}}\bfQ_{n0}^\T)\bfSigma_{110}^{-1}(\red{n^{\frac{1}{2}}}\bfQ_{n0})\\
	&+(p-p_0)\log(n)+o_p(1).
	\end{split}
	\end{equation*}
Since $(\red{n^{1/2}}\bfQ_{n0}^\T)\bfSigma_{110}^{-1} (\red{n^{1/2}}\bfQ_{n0})$ and $(\red{n^{1/2}}\bfQ_{n}^\T)\bfSigma_{11}^{-1} (\red{n^{1/2}}\bfQ_{n})$ have the same order $o_p(\log n)$ by the same argument above, we conclude that, for $M_0\subseteq M$ and $p_0<p$, $\lim\limits_{n\rightarrow\infty} P(\textsc{ELCIC}(M)-\textsc{ELCIC}(M_0)>0)=1$.
\end{proof}

\red{\section{Additional Simulation Studies}
\subsection{Variable selection in Case 2 under the GEE framework }
We consider the same setups in Case 2 in the main paper to only implement the variable selection by using the first part estimating equations in (3.11) in the main paper as our full estimating equations, and thus treating correlation coefficients as nuisance parameters. The selection rates by ELCIC and QIC are summarized in table \ref{tab1}, which further confirm the outperformance of ELCIC. 
\subsection{Variable selection for the augmented inverse probability weighting method}
To show unique and more general applications of our proposed criteria compared to the existing approaches, we provide an example for illustration, and under such context, the current existing criteria are not applicable. Here, we consider the augmented inverse probability weighted (AIPW) models with main focus on variable selection in the mean structure. Note the AIPW method has been popularly used to deal with missing data \citep{robins1994estimation} with extensive work in longitudinal data, survival analysis and causal inference \citep{bang2005doubly,seaman2009doubly,scharfstein1999adjusting,long2011} because of efficiency improvement and double robustness. For simplicity, here we only consider a simple linear regression with missing outcomes under the assumption of missing at random (MAR), but the extension to more complicated scenarios should be doable and straightforward. 

Suppose, for $i=1,\ldots,n$, we have the data where the outcomes $Y_i$ potentially missing, with $R_i$ as an observation indicator, i.e., $R_i=1$ if $Y_i$ is observed and $R_i=0$ otherwise. The covariates include $\bfX_i$ and $\bfS_i$. Also, we denote the observing probability of $Y_i$ as $\pi(\bfX_i,\bfS_i)=E(R_i\lvert\bfX_i,\bfS_i)$ parameterized by $\bfgamma$. The AIPW estimators are obtained by solving the following estimating equations
\begin{equation}\label{AIPW}
    \frac{1}{n}\sum_{i=1}^n\bigg\{\frac{R_i}{\hat{\pi}(\bfX_i, \bfS_i)}\bfU(Y_i,\bfX_i,\bfbeta)-\frac{R_i-\hat{\pi}(\bfX_i, \bfS_i)}{\hat{\pi}(\bfX_i, \bfS_i)}\tilde{\bfU}(\bfX_i,\bfS_i,\bfbeta)\bigg\}=\bf0
\end{equation}
where $\bfU(Y_i,\bfX_i,\bfbeta)=\bfX_i(Y_i-\mu_i(\bfbeta))$ and $\tilde{\bfU}(\bfX_i,\bfS_i, \bfbeta)=\bfX_i(\hat{a}(\bfX_i,\bfS_i)-\mu_i(\bfbeta))$, where $a(\cdot)$ is a function of $\bfX_i$ and $\bfS_i$ parameterized by $\bfalpha$ with $\hat{a}$ as some estimate of $E(Y_i\lvert\bfX_i,\bfS_i)$. 

As indicated in the literature, in addition to possible efficiency gains, one advantage of the AIPW estimator is that it is doubly robust, in the sense that it yields consistent results if either the missingness mechanism or the outcome regression model is correctly specified \citep{scharfstein1999adjusting}. However, it is challenging to correct specify $\hat{\pi}(\bfX_i,\bfS_i)$ or $\hat{a}(\bfX_i,\bfS_i)$ in practice due to limited prior knowledge, and thus the methods based on expectation of weighted quadratic mean square loss may not work \citep{shen2012, shen2017}. Also, likelihood based criteria are not applicable since semi-parametric approach is implemented here. In addition, the estimated quantities $\hat{\pi}(\bfX_i, \bfS_i)$ and $\hat{a}(\bfX_i,\bfS_i)$ make the model selection harder. However, our proposed ELCIC has great potential to deal with these issues, as we indicated in our method section, and the implementation is easy and straightforward. In particular, we take the formula on the left hand side of (\ref{AIPW}) as the full estimating equations in (2.5) in the main text, and also the nuisance parameters involved in $\hat{\pi}(\bfX_i, \bfS_i)$ and $\hat{a}(\bfX_i,\bfS_i)$ can be estimated from plug-in estimators. The consistency property still holds as long as these parameter estimates satisfy Condition 5 in the main text. 

We conduct extensive simulation studies to empirically evaluate the performance of ELCIC under the AIPW framework. Along with similar data structure and generation procedure in \cite{han2014multiply}, we first generate four mutually independent covariates $x_{1i}\sim$N$(5,1)$, $x_{2i}\sim$Bernoulli$(0.5)$, $x_{3i}\sim$N$(0,1)$,  $x_{4i}\sim$N$(0,1)$ and four auxiliary variables $s_{1i}=1+x_{1i}+2x_{2i}+\epsilon_{2i}$, $s_{2i}=I\{(s_{1i}+0.3\epsilon_{3i})>5.8\}$, $s_{3i}=\epsilon_{4i}$,$s_{4i}=x_2+\epsilon_{5i}$, where $\bfepsilon_i=(\epsilon_{1i},\ldots,\epsilon_{5i})^T$ follow a multivariate normal distribution with mean zeros and the covariance matrix $\bfSigma=$ with the diagonal elements valued by 1, the $(1,2)$ and $(2,1)$ entries as 0.5 and all others as $0$. The outcomes are generated from the linear model $Y_i=\beta_0+\beta_1 x_{1i}+\beta_2 x_{2i}+\beta_3 x_{3i}+\beta_4 x_{4i}+\epsilon_{1i}$ with $\bfbeta=(1,1,2,1,1)^{T}$. The true observing probability model is set to be logit$(\pi(\bfX_i,\bfS_i))=\gamma_0+\gamma_1 s_{1i}+\gamma_2 s_{2i}$ with $\bfgamma=(5, -1, 3)^{T}$, leading to the observing probability around $0.65$. We can also easily learn that the true imputation model should be $a(\bfX_i,\bfS_i)=\alpha_0+\alpha_ix_{1i}+\alpha_2x_{2i}+\alpha_3x_{3i}+\alpha_4x_{4i}+\alpha_5s_{1i}$. To further evaluate the effect of misspecification of either the model for missingness or the outcome regression model on our proposal's performance, we consider the following misspecified models: logit$(\pi^m(\bfX_i,\bfS_i))=\gamma^m_0+\gamma^m_1x_{1i}+\gamma^m_2x_{2i}+\gamma^m_3x_{3i}+\gamma^m_4x_{4i}+\gamma^m_5s_{1i}$ and
$a^m(\bfX_i,\bfS_i)=\alpha^m_0+\alpha^m_1s_{1i}+\alpha^m_2s_{2i}+\alpha^m_3s_{3i}$.

Thereafter, the variable selection is implemented based on our proposed ELCIC under four combinations evaluated: correct-specified $\pi(\bfX_i,\bfS_i)$ and correct-specified $a(\bfX_i,\bfS_i)$, correct-specified $\pi(\bfX_i,\bfS_i)$ and misspecified $a^m(\bfX_i,\bfS_i)$, misspecified $\pi^m(\bfX_i,\bfS_i)$ and correct-specified $a(\bfX_i,\bfS_i)$, denoted by PC$\_$IC, PC$\_$IM, and PM$\_$IC, respectively. Note that we do not consider the case with both misspecified $\pi^m(\bfX_i,\bfS_i)$ and $a^m(\bfX_i,\bfS_i)$ because the estimates will not be consistent,and the results based upon these inconsistent estimates are not reliable any more. For each scenario, we generate 500 Monte carlo data with sample size $n=250, 500$. Seven candidate models are considered for variable selection with selection rates recorded. The results are summarized in Table 2 in the Appendix. Overall, the selection rates for the correct mean structure are satisfactory with a high level (i.e., $>90\%$) when either of the models for missingness and outcome regression is correctly-specified, and increase as sample size becomes larger. In particular, when sample size $n=250$, the selection rate is up to $93.2\%$ when both models are correctly specified. For larger sample size (i.e., $n=500$), the results for PC$\_$IC, PC$\_$IM, PM$\_$IC are comparable, indicating our proposal is workable in the APIW framework. 

\subsection{Variable selection for the Ultra-high dimensional cases}
Per the suggestion by the reviewer and the editor, we conducted additional investigation on one potential extension of our proposed ELCIC criterion to ultra-high dimension variable selection situation via simulation studies. As we mentioned earlier, the proofs in theory for (ultra)high-dimensional cases with empirical likelihood are completely different with our current strategy, and are not trivial to be extended with more effort involved, which will be separately and fully pursued in our future studies. Here, we provide some empirical studies to show the challenges and numerical performance via simulation.

Note that there are two main issues for this extension. One issue is that the full model in ultra-high dimensional case could be substantially big, which may prevent the usage of classic empirical likelihood \citep{chen2009effects}. One tentative strategy is to consider the following modified criterion: 
\begin{equation}\label{ELCIC_ultra_1}
\begin{split}
       \text{ELCIC}^\ast=&-2\log R^R(\hat{\bflambda},\hat{\bfgamma})-n\sum_{j=1}^LP_{2,\nu}(\lvert\hat{\lambda}_j\lvert)\\
       &+n\sum_{k=1}^p P_{1,\pi}(\lvert\hat{\gamma}_k\lvert)+C_n\log(n)\text{df}_{\pi},
\end{split}
\end{equation}
where $\hat{\bflambda}$ and $\hat{\bfgamma}$ are solved by penalized empirical likelihood \citep{chang2018}. $P_{1,\pi}(\cdot)$ and $P_{2,\nu}(\cdot)$ are two penalty functions regulating the $\bfgamma$ and $\bflambda$ with tuning parameters $\pi$ and $\nu$, respectively. $C_n$ is a scaling factor diverging to infinitely at a slow rate \citep{tang2010}. $\text{df}_{\pi}$ is the number of nonzero coefficients in $\bfgamma$. Suggested by \cite{tang2010} and \cite{chang2018}, this ad-hoc criterion turns out to numerically work well but seems difficult to theoretically investigate its theoretical properties due to penalty functions with two extra tuning parameters involved. Another disadvantage is the complicated and unstable computational algorithm with high computation burden, which may lead to unsatisfactory results especially when generalized estimating equations embedded with empirical likelihood \citep{chang2018}. Therefore, it is not straightforward and feasible to apply into (\ref{ELCIC_ultra_1}) in ultra-high dimensional cases.

Here, we propose an alternative, a two-stage selection procedure, to resolve those issues but still make our proposed criterion easily implemented with satisfactory performance. In the first stage, we apply a non-parametric screening method, such as SIRS \citep{zhu2011model}, to reduce the ultra-high dimension into relatively low dimension. Then, in the second stage, we make use of the reduced model as the full model for ELCIC to capture the true one. Of note is that our proposed strategy has two advantages: 1) the first one is the consistency property because both SIRS procedure and ELCIC are proved to be consistent,thus making the combined two-stage selection consistent; 2) the second one is its easy and flexible implementation in practice. We will provide some numerical evaluation below to show its utility and advantage of ELCIC over other popularly used criteria.

We consider the following mean structure $\log(\mu_i)=\bfX_i^T\bfbeta$ for $i=1,\ldots,n$, where $\bfX_i$ is a covariate vector from a $p_n=1500-$dimensional multivariate normal distribution \textbf{MVN}(0,\bfV) with the covariance matrix $\bfV$ as an AR1 matrix with variance as 1 and correlation as 0.5. We generate outcomes from Poisson distribution, negative binomial distribution with its number of failure $k=2,4,8$, respectively. We first apply the SIRS to reduce the ultra-high dimension (i.e., $p_n=1500$) to relatively low dimension ($p_n=20$). Then, we apply the SCAD learning procedure to implement variable selection to the reduced candidate pool, where the tuning parameter is selected by cross-validation, BIC and ELCIC. Note that BIC is always set under the assumption of Poisson distribution, so that we can observe how mis-specification of distribution affects the performance of variable selection. 200 Monte Carlo data with sample size $n=250, 500$ are generated, and several important measurements are recorded, such as consistency, prediction error, false negative, false positive, and exact selection rate. The results are summarized in Table \ref{table3} in the Supplementary Material.

It is shown that when the outcomes are generated from Poisson distribution, BIC performs the best in terms of its low false positive and high exact selection rates. However, when the true distribution is negative binomial, BIC tends to select over-fitted model, and the performance becomes worse as the data have higher over-dispersion. To the contrast, ELCIC is observed to be robust of handling distribution mis-specification and have more stable selection performance. Cross-validation performs in the middle of ELCIC and BIC, but more time consuming. \chixiang{In addition, as sample size increases up to 750, ELCIC tends to have better selection rate and smaller false positives, while BIC has negligibly improved performance, and the false positives are not always reduced for the cross-validation approach.} Therefore, these numerical results justify the potential feasibility of applying ELCIC to ultra-high dimension, having much less computation burden and high stable performance in the meantime without sacrificing the distribution-free advantage. 
}

\bibliographystyle{chicago}
\bibliography{Bibliography-MM-MC}

\newpage
\begin{table}
	\centering
	\caption{Performance of ELCIC compared with QIC for the scenarios under longitudinal count data. 500 Monte Carlo datasets are generated with sample size $n=100, 300$ and the number of observations within-subject $T=3, 5$. The true mean structure is $\{x_1,x_2\}$ and an exchangeable (EXC) correlation structure with the correlation coefficient $\rho=0.5$ is the true model. Only variable selection is considered. }\label{tab1}
	\begin{tabular}{cccccccccc}
		\toprule
\multirow{2}{*}{Cluster Size} & \multirow{2}{*}{Sample Size} & \multirow{2}{*}{Criteria} & \multicolumn{6}{c}{Candidate Models}              \\
 &   &  & $x_1,x_2,x_3$ & \boldsymbol{$x_1,x_2$} & $x_1,x_3$ & $x_2,x_3$ & $x_1$ & $x_3$ \\
 \midrule
$T=3$ &	$n=100$  &	ELCIC	&	0.042	&	0.956	&	0	&	0.002	&	0	&	0	\\
&	&	QIC	&	0.14	&	0.86	&	0	&	0	&	0	&	0	\\
&	$n=300$  &	ELCIC	&	0.022	&	0.978	&	0	&	0	&	0	&	0	\\
&	&	QIC	&	0.116	&	0.884	&	0	&	0	&	0	&	0	\\ \midrule
$T=5$ &	$n=100$  &	ELCIC	&	0.034	&	0.966	&	0	&	0	&	0	&	0	\\
&	&	QIC	&	0.11	&	0.89	&	0	&	0	&	0	&	0	\\
&	$n=300$  &	ELCIC	&	0.018	&	0.982	&	0	&	0	&	0	&	0	\\
&	&	QIC	&	0.1	&	0.9	&	0	&	0	&	0	&	0	\\
		\bottomrule
	\end{tabular}
\end{table}

\begin{table}
\caption{\red{Performance of ELCIC for variable selection in the mean structure under the AIPW framework. 500 Monte Carlo data are generated with sample size $n=250, 500$. The model with $\{x_1,x_2,x_3,x_4\}$ is the true model. PC\_IC: correct-specified $\pi(\bfX_i,\bfS_i)$ and correct-specified $a(\bfX_i,\bfS_i)$; PC\_IM: correct-specified $\pi(\bfX_i,\bfS_i)$ and misspecified $a^m(\bfX_i,\bfS_i)$; PM\_IC: misspecified $\pi^m(\bfX_i,\bfS_i)$ and correct-specified $a(\bfX_i,\bfS_i)$.} }
\label{table2}
\begin{tabular}{ccccccccc}
\toprule
       &   & $x_1, x_2$ & $x_1, x_2, x_3$ & $x_1, x_2$ & \boldsymbol{$x_1, x_2$} & $x_1, x_2, x_3$ & $x_1, x_2, x_3$ & $x_1, x_2, x_3$ \\
          Model &$n$ &  &  & $x_3, s_3$ & \boldsymbol{$x_3, x_4$} & $x_4, s_3$ & $ x_4, s_4$ & $ x_4, s_3,s_4$ \\ \midrule
PC\_IC & 250& 0          & 0               & 0                    & 0.932                & 0.028                     & 0.040                      & 0                               \\
PC\_IC & 500 & 0          & 0               & 0                    & 0.950                 & 0.026                     & 0.024                     & 0                               \\
PC\_IM & 250& 0          & 0               & 0                    & 0.916                & 0.036                     & 0.046                     & 0.002                           \\
PC\_IM & 500& 0          & 0               & 0                    & 0.952                & 0.028                     & 0.020                      & 0                               \\
PM\_IC & 250 &0          & 0               & 0                    & 0.928                & 0.03                      & 0.042                     & 0                               \\
PM\_IC & 500 & 0          & 0               & 0                    & 0.958                & 0.022                     & 0.020                      & 0 
\\ \bottomrule 
\end{tabular}
\end{table}

\begin{table}
\caption{\chixiang{Performance of ELCIC compared with cross-validation (CV) and BIC (under Poisson distribution specification without over-dispersion) for two stage ultra-high variable selection in the mean structure under Poisson distribution with potential over-dispersed outcomes. 500 Monte Carlo data are generated with sample size $n=250, 500, 750$. NB: negative binomial; MS: consistency $\lVert\hat{\bfbeta}-\bfbeta_0\lVert^2$; PS: prediction $\lVert\bfX\hat{\bfbeta}-\bfX\bfbeta_0\lVert^2$; FN: false negative; FP: false positive; ES: exact selection rate; OS: over selection rate; US: under selection rate} }
\label{table3}
\footnotesize
\begin{tabular}{lcllllllcc}
\toprule
Scenario & $n$   & Criterion & MS   & PS   & FN   & FP    & ES   & OS   & US   \\ \midrule
POISSON  & $n=250$ & CV        & 0.22 & 0.15 & 0.34 & 1.32  & 0.52 & 0.24 & 0.10 \\
         &       & BIC       & 0.17 & 0.10 & 0.23 & 1.27  & 0.74 & 0.06 & 0.03 \\
         &       & ELCIC     & 0.17 & 0.11 & 0.27 & 1.05  & 0.59 & 0.20 & 0.12 \\
         & $n=500$ & CV        & 0.07 & 0.06 & 0.09 & 0.70  & 0.71 & 0.25 & 0.04 \\
         &       & BIC       & 0.04 & 0.02 & 0.02 & 0.09  & 0.96 & 0.03 & 0.01 \\
         &       & ELCIC     & 0.03 & 0.03 & 0.04 & 0.26  & 0.89 & 0.10 & 0.02 \\
         & \chixiang{$n=750$} & CV        & 0.05 & 0.04 & 0.04 & 0.68  & 0.72 & 0.27 & 0.02 \\
         &       & BIC       & 0.02 & 0.01 & 0.00 & 0.02  & 0.98 & 0.02 & 0.00 \\
         &       & ELCIC     & 0.02 & 0.01 & 0.01 & 0.13  & 0.92 & 0.08 & 0.01 \\ \midrule
NB $k=2$   & $n=250$ & CV        & 0.61 & 0.43 & 0.82 & 4.71  & 0.05 & 0.42 & 0.08 \\
         &       & BIC       & 0.63 & 0.46 & 0.57 & 9.35  & 0.00 & 0.54 & 0.01 \\
         &       & ELCIC     & 0.53 & 0.38 & 0.69 & 4.08  & 0.07 & 0.43 & 0.08 \\
         & $n=500$ & CV        & 0.19 & 0.14 & 0.13 & 4.06  & 0.17 & 0.74 & 0.05 \\
         &       & BIC       & 0.20 & 0.17 & 0.02 & 10.19 & 0.00 & 0.99 & 0.00 \\
         &       & ELCIC     & 0.15 & 0.12 & 0.07 & 2.68  & 0.27 & 0.69 & 0.04 \\
         & \chixiang{$n=750$} & CV        & 0.12 & 0.09 & 0.05 & 4.00  & 0.24 & 0.74 & 0.03 \\
         &       & BIC       & 0.15 & 0.13 & 0.00 & 10.94 & 0.00 & 1.00 & 0.00 \\
         &       & ELCIC     & 0.08 & 0.07 & 0.00 & 1.91  & 0.39 & 0.61 & 0.00 \\ \midrule
NB k=4   & $n=250$ & CV        & 0.32 & 0.20 & 0.45 & 3.44  & 0.13 & 0.52 & 0.05 \\
         &       & BIC       & 0.32 & 0.22 & 0.33 & 7.28  & 0.03 & 0.67 & 0.00 \\
         &       & ELCIC     & 0.28 & 0.18 & 0.37 & 2.66  & 0.23 & 0.46 & 0.07 \\
         & $n=500$ & CV        & 0.07 & 0.05 & 0.02 & 3.40  & 0.34 & 0.64 & 0.02 \\
         &       & BIC       & 0.09 & 0.08 & 0.00 & 7.98  & 0.01 & 1.00 & 0.00 \\
         &       & ELCIC     & 0.05 & 0.04 & 0.00 & 1.83  & 0.51 & 0.50 & 0.00 \\
         & \chixiang{$n=750$} & CV        & 0.05 & 0.04 & 0.00 & 3.05  & 0.47 & 0.54 & 0.00 \\
         &       & BIC       & 0.07 & 0.06 & 0.00 & 8.65  & 0.01 & 0.99 & 0.00 \\
         &       & ELCIC     & 0.04 & 0.03 & 0.00 & 1.34  & 0.59 & 0.42 & 0.00 \\\midrule
NB $k=6$   & $n=250$ & CV        & 0.27 & 0.17 & 0.40 & 3.36  & 0.24 & 0.45 & 0.06 \\
         &       & BIC       & 0.29 & 0.19 & 0.33 & 6.27  & 0.08 & 0.63 & 0.00 \\
         &       & ELCIC     & 0.26 & 0.16 & 0.37 & 2.55  & 0.28 & 0.42 & 0.09 \\
         & $n=500$ & CV        & 0.05 & 0.05 & 0.03 & 2.42  & 0.47 & 0.52 & 0.01 \\
         &       & BIC       & 0.06 & 0.05 & 0.01 & 6.43  & 0.08 & 0.92 & 0.00 \\
         &       & ELCIC     & 0.05 & 0.04 & 0.02 & 1.72  & 0.54 & 0.46 & 0.01 \\
         & \chixiang{$n=750$} & CV        & 0.03 & 0.03 & 0.00 & 2.99  & 0.49 & 0.51 & 0.00 \\
         &       & BIC       & 0.05 & 0.04 & 0.00 & 7.41  & 0.02 & 0.98 & 0.00 \\
         &       & ELCIC     & 0.03 & 0.02 & 0.00 & 1.33  & 0.61 & 0.39 & 0.00   \\ \bottomrule          
\end{tabular}
\end{table}